\newtheorem{theorem}{Theorem}
\newtheorem{remark}{Remark}
\newtheorem{definition}{Definition}
\newtheorem{corollary}{Corollary}
\newtheorem{lemma}{Lemma}
\newcommand{\C}{\mathbb{C}}
\def\eqref#1{(\ref{#1})}
\begin{document}
\date{}
\author{Waleed Aziz\\
Department of Mathematics, College of Science\\
Salahaddin University, Erbil, Kurdistan Region, Iraq \\
\tt{waleed.aziz@su.edu.krd}\\
{}\\
Colin Christopher\\
School of Engineering, Computing and Mathematics, Plymouth University\\
Plymouth, PL4 8AA, UK\\
\tt{C.Christopher@plymouth.ac.uk}\\
{}\\
Chara Pantazi \\
Departament de Matem\`atiques, Universitat Polit\`ecnica de Catalunya (EPSEB)\\
Av. Doctor Maranon, 44-50, 08028, Barcelona, Spain\\
\tt{chara.pantazi@upc.edu}\\
{}\\
Sebastian Walcher\\
Fachgruppe Mathematik, RWTH Aachen, 52056 Aachen, Germany\\
\tt{walcher@mathga.rwth-aachen.de}
}
\title{\textbf{Liouvillian integrability of vector fields in higher dimensions}}
\maketitle
\begin{abstract}{We consider complex rational vector fields in dimension $n>2$ (equivalently, differential forms of degree  $n-1$ in $n$ variables) which admit a Liouvillian first integral. Extending a classical result by Singer for $n=2$, our main result states that there exists a first integral which is obtained by two successive integrations from one-forms with coefficients in a finite algebraic extension of the rational function field. The proof uses Puiseux series in a novel way to simplify computations. We also apply this method to give elementary proofs of Singer's theorem for rational one-forms, and of the Prelle-Singer theorem on elementary integrability of rational vector fields.
  \\
  MSC (2020): 34A99, 12H05, 34M15.\\
  Key words: Differential form, Liouvillian function, Darboux function, Puiseux series.
  
}
\end{abstract}

\section{Introduction and overview of results}

The classical theory of integration in finite terms goes back to Liouville. For 20th century accounts we refer to the seminal works of Risch \cite{Risch1,Risch2} and Rosenlicht \cite{Ros, Ros2}. Liouvillian functions are obtained from rational functions via a finite sequence of adjoining integrals, exponentials and algebraic functions, see \cite{Singer1992, Christopher1999} for details. They play a special role in the integrability problem for functions, vector fields and differential forms. \\
Liouvillian integrability is not only of interest for its own sake but also relevant for applications. There is a number of publications that characterize the Liouvillian first integrals of certain planar families; pars pro toto we just mention Cairo et al.\  \cite{LCHGJL}, Oliveira et al.\ \cite{OST2021}.  We also recall that the existence of a first integral has important consequences for the dynamics of a system; see for example Garc\' ia and Gin\' e \cite{GG2022}. \\
Moreover, one should mention work that characterizes Liouvillian first integrals of some families in three dimensions; see Ollagnier \cite{OllagnierLiouv, Ollagnier2001}, and some recent studies on integrability aspects of certain three dimensional systems; see Fer\v{c}ec  et al.\ \cite{FRTZ2022}, and also  \cite{LP22023}. \\
Several algorithmic procedures have been presented in the literature to obtain Liouvillian first integrals for two dimensional vector fields. For instance, some of them build on the classical Preller--Singer method \cite{PreSin}. See e.g.\ Avellar et al.\ \cite{ADDM}, Ch\`eze and Combot \cite{CC2020}, Duarte and da Mota \cite{LDLM2}. Concerning algorithms for the computation of Liouvillian first integrals in higher dimensions, see for instance Combot \cite{combot2019}.\\
In an influential paper \cite{Singer1992}, Singer showed that the existence of a Liouvillian first integral of
a two dimensional polynomial vector field is equivalent to the existence of an integrating factor whose logarithmic differential is a closed rational 1-form. As a consequence, if 
$\omega$ is the polynomial 1-form defining the level curves of the
system, there is a closed rational 1-form, $\alpha$, such that  $\omega\cdot\exp(-\int\!\alpha)$ is closed and hence there exists a Liouvillian first integral $\int(\omega\cdot\exp(-\int\!\alpha))$. 
Moreover, such 1-forms are necessarily logarithmic differentials of {\it Darboux} functions, that is, functions of the form
 $\exp{(g/f)}\,\prod f_i^{a_i},$
where the $f_i$, $f$ and $g$ are polynomials in the coordinate variables, and the $a_i$ are complex constants. 
Thus, by Singer's theorem, Darbouxian integrability captures all closed form solutions of two dimensional systems. \\
 Singer's theorem has been generalized in various ways, see for example \.{Z}o{\l}\c{a}dek  \cite{Zoladek1998}, Casale \cite{GCasale}, and Zhang  \cite{zhang2016}. In particular, \.{Z}o{\l}\c{a}dek in \cite{Zoladek1998}  presents a multi--dimensional version of Singer's theorem for rational 1-forms.
Zhang \cite{zhang2016} provides a generalization of Singer's theorem to vector fields in $n$ dimensions that admit Darbouxian Jacobi multipliers. \\
 The objective of the present paper is to extend and modify Singer's theorem for complex polynomial or rational vector fields in higher dimensions.  As a preliminary step, we state Singer's theorem for rational one-forms in $n$ variables (due to {Z}o{\l}\c{a}dek \cite{Zoladek1998}) in Theorem \ref{MSing}. Thus, Singer's theorem for 1-forms in dimension two has a natural extension to higher dimensions. The point
of view taken in this paper allows for a very compact proof which we
give in order to motivate the more general case.\\
In addition, we state and prove a characterization of closed 1-forms over the rational function field $K=\mathbb C(x_1,\ldots,x_n)$, as being the 
logarithmic differentials of Darboux functions over $K$.\\
Theorem \ref{algaddendum} is the principal result of the present paper. Informally, it says that there exist a finite algebraic extension $\widetilde K$ of $K$ and 1-forms $\omega$, $\alpha$ over $\widetilde K$ such that $\omega\cdot\exp(-\int\!\alpha)$
is a closed 1-form, and hence there exists a Liovillian first
integral of the form $\int(\omega\cdot\exp(-\int\!\alpha))$.
This is an extension of Singer's theorem for vector fields in $n$ dimensions. In contrast to dimension two, one cannot generally choose $\widetilde K=K$, but the rational function field must be replaced by a finite algebraic extension.\\
In dimension three we furthermore show that if there exists no solution with $\widetilde K=K$, then there exists an inverse Jacobi multiplier over $K$ of Darboux type; see Theorem \ref{Singer}.\\
Our proofs use formal Laurent and Puiseux series throughout. To illustrate the range of applicability of these techniques, we employ them in the final section to re-prove the Prelle-Singer theorem \cite{PreSin} about elementary first integrals.

\section{Background and some known results}\label{sbasic}

\subsection{Liouvillian extensions}
We recall some basic notions and facts from differential algebra. For more details see e.g.\ the monograph by Kolchin \cite{Kolchin}. 
Fields are always assumed
to be of characteristic 
zero.

A {\em differential field} is a pair $(K, \Delta)$ where $K$ is a field together with a finite set $\Delta$ of derivations of $K$. Thus for all $\partial\in\Delta$ and all $x,y\in K$ one has the identities
$\partial(x+y)=\partial x + \partial y,\ \partial(xy)=(\partial x) y + x (\partial y).$
We will restrict attention to commutative differential fields, that is the derivations in
$\Delta$ commute. 

The {\em constants} of $(K, \Delta)$ are  those elements $x\in K$ such that
  $\partial x = 0$ for all $\partial\in\Delta$, and the subfield of constants will be denoted by $C_K$.

A {\em differential extension} of $(K, \Delta)$ is a differential field $(\tilde{K}, \tilde{\Delta})$
where $\tilde{K}$ is an extension field of $K$ and each derivation $\tilde{\partial} \in \tilde{\Delta}$
restricts (uniquely) to an element $\partial \in \Delta$. Therefore, it is natural to write $(\tilde{K}, \Delta)$.

We will be mostly interested in the rational function field $\mathbb C(x_1,\ldots,x_n)$, with $\Delta=\{\partial/\partial x_1,\ldots,\partial/\partial x_n\}$, and its extensions.\footnote{One could, in fact, replace $\mathbb C$ by any algebraically closed field of characteristic zero.} Moreover, we focus on Liouvillian extensions (see also Singer \cite{Singer1992}):

\begin{definition}\label{LiouvEx}\textup{
An extension $L\supset K$ of differential fields is called a {\em Liouvillian extension} of $K$ if $C_K = C_L$ and if there exists a tower of fields of the form
\begin{equation}\label{LiouvExeq} K=K_0 \subset K_1 \subset \ldots \subset K_m=L, \end{equation}
such that for each $i\in\{0,\ldots,m-1\}$ we have one of the following:
\begin{list}{}{}
\item{(i)} $K_{i+1}=K_{i}(t_i)$, where $t_i\neq 0$ and $\partial t_i / t_i \in K_{i}$ for all $\partial\in\Delta$; thus $t_i$ is an exponential of an integral of
some element of $K_{i}$.
\item{(ii)} $K_{i+1}=K_{i}(t_i)$, where $\partial t_i \in K_{i}$ for all $\partial\in\Delta$; thus $t_i$ is an integral of an element of $K_{i}$.
\item{(iii)} $K_{i+1}=K_{i}(t_i)$, where $t_i$ is algebraic over $K_{i}$.
\end{list}}
\end{definition}
\begin{remark}{\em 
    By the primitive element theorem, condition (iii) is equivalent to $K_{i+1}$ being a finite algebraic extension of $K_i$. 
    Moreover, we note that the derivations of a given Liouvillian extension $\widetilde K$ of $K$ extend canonically to any finite algebraic extension of $\widetilde K$.}
\end{remark}

We will make extensive use of differential forms, {which generally are more convenient both for the statements and proofs of our results.}
 If $L$ is a differential extension of $K=\mathbb C(x_1,\ldots,x_n)$ then we denote by $L'$ the space of differential 1-forms with coefficients in $L$.
That is, every 1-form $\alpha\in L'$ can be written as $\alpha=\sum a_i \, dx_i$ with $a_i \in L$.
Since the $x_i$ are algebraically independent, we can
treat the $dx_i$ simply as placeholders for the calculations to keep track of the various derivatives. We will freely use the familiar properties of the exterior derivative operator $d$ and of wedge products, but put no deeper algebraic interpretation on the $dx_i$.

Recall that one calls a form $\beta$ {\it closed} whenever $d\beta=0$, and {\it exact} when $\beta =d\theta$ for some form $\theta$.

\begin{remark}\label{LiouvEx2}\textup{
 If $L$ is a differential extension of $K=\mathbb C(x_1,\ldots,x_n)$, then 
  one can restate conditions (i)--(iii) in Definition~\ref{LiouvEx} by the following Types:
\begin{list}{}{}
\item{(i)} $K_{i+1}=K_i(t_i)$, where $t_i\not=0$ and $dt_i=\delta_i t_i$ with some $\delta_i \in K'_i$ (necessarily $d\delta_i=0$).
\item{(ii)} $K_{i+1}=K_i(t_i)$, where $dt_i=\delta_i$ with $\delta_i \in K'_i$  (necessarily $d\delta_i=0$).
\item{(iii)} $K_{i+1}$ is a finite algebraic extension of $K_i$.
\end{list}
{We note that the condition $C_K=C_L$ on constants can always be met in our context for extensions of the rational function field $K$ (see Singer \cite{Singer1992}); so we will not mention it explicitly in the following, freely using the consequence that $d\phi = 0$ for
$\phi \in L$ means that $\phi \in C_K$.}
}\end{remark}

\subsection{Singer's theorem for one-forms}
The following definition is standard. 

\begin{definition}\label{lioudef}\textup{
Given a 1-form $\omega \in \mathbb C(x_1,\ldots,x_n)'$, we say that $\omega$ is {\it Liouvillian integrable} if there exists $\phi$ in some Liouvillian extension $L$ of $\mathbb C(x_1,\ldots,x_n)$
such that $d\phi \wedge \omega = 0$. {More specifically, we will state that $\omega$ is Liouvillian integrable over $L$ when the field of definition is relevant.}}\end{definition}

\begin{remark}\label{liourem}\textup{
\begin{enumerate}[(a)]
    \item We record a convenient characterization of Liouvillian integrability:
    \begin{itemize}
        \item  If the condition in Definition \ref{lioudef} holds, then $\omega = m\,d\phi$ for some $m \in L$ and $d\omega = \alpha \wedge \omega$ with $\alpha = dm/m$, thus $d\alpha=0$.
\item Conversely, if there exists a Liouvillian extension $L$ of $K$ and $\omega\not=0,\,\alpha\in L'$ such that $d\omega=\alpha\wedge\omega$ and $d\alpha=0$, then with Remark \ref{LiouvEx2}, part (i), there exists $m$ in a Liouvillian extension $L_1$ of $L$ such that $\alpha=-dm/m$, whence
\[
d(m\omega)=dm\wedge\omega+m{d}\omega =m(-\alpha\wedge\omega+d\omega)=0,
\]
which in turn implies $m\omega=d\phi$ for some $\phi$ in a Liouvillian extension $L_2\supset L_1$, hence of $\mathbb C(x_1,\ldots,x_n)$.
One calls $m$ an {\em inverse integrating factor} for $\omega$.
    \end{itemize}
 \item  The condition above implies that $d\omega \wedge \omega = 0$, so that $\omega$ is completely integrable in the usual sense (cf.\ e.g.\ Camacho and Lins Neto
 \cite{CLN}, Appendix \S3). 
\end{enumerate}
}
\end{remark}

The following result says that Singer's theorem for 1-forms in dimension two carries over to 1-forms in arbitrary dimension. The result, and the first proof, is due to {Z}o{\l}\c{a}dek in \cite{Zoladek1998}. We give a different, elementary, proof here.
\begin{theorem}[Singer's Theorem for 1-forms]\label{MSing}
Let  $\omega$ be a rational 1--form over $K=\mathbb C(x_1,\ldots,x_n)$. Then $\omega$ is Liouvillian integrable if and only if there exists a closed 1--form $\alpha \in \mathbb C(x_1,\ldots,x_n)'$ such that
$d\omega=\alpha \wedge \omega$.
\end{theorem}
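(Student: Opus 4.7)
The backward implication is immediate from Remark~\ref{liourem}(a): given closed $\alpha \in K'$ with $d\omega = \alpha \wedge \omega$, we adjoin $m$ with $dm/m = -\alpha$ in an exponential-type extension, obtaining $d(m\omega) = 0$, and then adjoin a primitive $\phi$ of $m\omega$, giving a Liouvillian first integral.

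For the forward direction, Remark~\ref{liourem}(a) already supplies a closed $\alpha \in L'$ with $d\omega = \alpha \wedge \omega$ for some Liouvillian extension $L\supset K$ equipped with a tower \eqref{LiouvExeq}. My plan is to descend $\alpha$ one step at a time. The key lemma to prove is: if $M = K_i(t_i)$ is a single Liouvillian step over $K_i$, and $\widehat\alpha \in M'$ is closed with $d\omega = \widehat\alpha \wedge \omega$, then there exists $\widehat\alpha^{\,\prime} \in K_i'$ closed with $d\omega = \widehat\alpha^{\,\prime} \wedge \omega$. Applied $m$ times, this produces a closed $\alpha \in K'$ with the required intertwining property.

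The descent splits along the three extension types. In the algebraic case (iii), we enlarge to a Galois closure and take the average of the Galois conjugates of $\widehat\alpha$: each conjugate $\sigma(\widehat\alpha)$ is closed and satisfies $d\omega = \sigma(\widehat\alpha)\wedge\omega$ because $\omega$, having $K_i$-coefficients, is fixed by $\sigma$; the average therefore lies in $K_i'$ and inherits both properties. In the integral case (ii) and exponential case (i), $t_i$ is transcendental over $K_i$, and the tool advertised in the introduction enters: expand the coefficients of $\widehat\alpha$ as formal Laurent or Puiseux series in $t_i$, with $dt_i$ either equal to $\delta_i \in K_i'$ (integral) or to $t_i\delta_i$ (exponential), both closed. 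The condition $d\widehat\alpha = 0$ yields recurrences among the series coefficients, and the relation $\widehat\alpha \wedge \omega = d\omega$, whose right-hand side is $t_i$-independent, forces the nontrivial $t_i$-powers of $\widehat\alpha$ to wedge trivially with $\omega$. Combining these two constraints, the non-constant-in-$t_i$ part of $\widehat\alpha$ is seen to be exact (as a $t_i$-differential) up to $K_i'$-terms and can be subtracted off without disturbing closedness or the wedge relation, leaving a $K_i'$-form as required.

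The main obstacle is precisely this transcendental descent: tracking the residues of $\widehat\alpha$ as a rational function of $t_i$ at poles defined only over $\overline{K_i}$, and verifying that the constant-in-$t_i$ term extracted from the expansion is genuinely $K_i$-rational, closed, and still intertwines with $\omega$. Partial-fraction arguments over $\overline{K_i}$ would require a case split by pole type and careful Galois bookkeeping; I expect the authors' Puiseux-series approach to handle both the integral and exponential cases uniformly and to make the cancellation of the non-constant terms transparent via a single term-by-term matching argument.
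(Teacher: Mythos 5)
Your overall architecture coincides with the paper's: descent by induction down the tower \eqref{LiouvExeq}, Galois averaging for algebraic steps (your type (iii) argument is exactly the one in the paper), and a formal series expansion in the transcendental generator for types (i) and (ii); the backward direction via Remark~\ref{liourem} is also fine. One preliminary remark on machinery: for this theorem plain Laurent expansions in decreasing powers of $t=t_i$ with coefficients in $K_i'$ (Lemma~\ref{laurentlem}) suffice. Puiseux series, residues at poles defined over $\overline{K_i}$, and partial fractions are not needed here; a single formal expansion at one place already keeps every coefficient in $K_i'$ and avoids all the Galois bookkeeping you anticipate. (Puiseux series only become necessary for Theorem~\ref{algaddendum}, where algebraic extensions sit on top of transcendental ones.)

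The genuine gap is in the integral case (ii). You propose to extract the $t$-constant term $\alpha_0$ of $\widehat\alpha=\sum_k\alpha_k t^k$ (equivalently, to subtract off the non-constant part) and assert that what remains is closed and still satisfies $d\omega=\alpha_0\wedge\omega$. The wedge relation does survive: since $d\omega$ is $t$-free, comparing coefficients in $\widehat\alpha\wedge\omega=d\omega$ gives $\alpha_0\wedge\omega=d\omega$ and $\alpha_k\wedge\omega=0$ for $k\neq 0$, as you say. But closedness of $\alpha_0$ fails in general: when $dt=\delta\in K_i'$ the exterior derivative shifts $t$-degrees, $d(\alpha_k t^k)=t^k\,d\alpha_k+k\,t^{k-1}\delta\wedge\alpha_k$, so $d\widehat\alpha=0$ yields $d\alpha_0=-\delta\wedge\alpha_1$, which need not vanish (writing $\alpha_1=h_1\omega$, it equals $-h_1\,\delta\wedge\omega$). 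Your claim that the non-constant part is ``exact up to $K_i'$-terms'' is not substantiated and is exactly what would be needed to force $d\alpha_0=0$. (In the exponential case (i) the derivative preserves $t$-degrees, so there your plan does work and agrees with the paper.) The paper's fix for case (ii) is to look instead at the \emph{leading} coefficient $\alpha_r$ of the expansion in decreasing powers of $t$: this one is automatically closed, because $d\alpha_r$ is the top-degree coefficient of $d\widehat\alpha$. One then splits on the sign of $r$: if $r>0$, then $\alpha_r\wedge\omega=0$ forces $\alpha_r=h\omega$ with $h\in K_i$, and $d\alpha_r=0$ gives $d\omega=-(dh/h)\wedge\omega$, so the descended cofactor is the logarithmic derivative $-dh/h$ --- not any coefficient of your expansion; if $r=0$ one takes $\alpha_0$ (now genuinely closed); if $r<0$ then $d\omega=0$ and one takes $\widetilde\alpha=0$. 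Without this leading-term argument and case split, the type (ii) descent does not close.
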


\begin{proof}
We proceed by induction on the tower of fields. Let $K_{i+1}$ be a Liouvillian extension of $K_{i}$, of one of the types (i)--(iii) in Definition \ref{LiouvEx}, and 
consider a closed 1--form $\alpha \in K'_{i+1}$ such that
$d\omega=\alpha \wedge \omega$. We have to show that there exists $\tilde{\alpha} \in K'_{i}$ such that
$d\omega=\tilde{\alpha} \wedge \omega$ with $d\tilde{\alpha}=0$.
We discuss the types from Remark~\ref{LiouvEx2} separately.

\begin{list}{}{}
\item{Type \bf(i)}. We can suppose that $t_i=t$ is transcendental over $K_i$, else this falls into type (iii).
Then ({by Lemma \ref{laurentlem}}) write $\alpha$ as a formal Laurent series in decreasing powers of $t$,
\begin{equation}\label{LAU}
\alpha= \alpha_r t^r+\alpha_{r-1} t^{r-1}+\ldots, \quad \alpha_r \in K'_i, \quad \alpha_r \neq 0.
\end{equation}
Equating powers of $t^0$ in $\alpha \wedge \omega=d\omega$ and $d\alpha = 0$, we see that
$d\omega=\alpha_0 \wedge \omega, \quad d\alpha_0=0.$ Therefore, we can choose $\tilde{\alpha}=\alpha_0 \in K_i$.

\item{Type \bf(ii)}. As above, we suppose that $t_i=t$ is transcendental over $K_i$, and write $\alpha$ in the form \eqref{LAU}.
From $d\alpha=0$ we deduce that $d\alpha_r=0$.
Furthermore, from $d\omega=\alpha \wedge \omega$, we obtain three cases
depending on $r$:
\begin{itemize}{}{}
\item If $r>0$, then $\alpha_r \wedge \omega=0$. In this case, there exists $h \in K_i$ such that
$\alpha_r =h \omega$, thus we get $d\omega=-\frac{dh}{h} \wedge \omega$. We may take $\tilde{\alpha}=-\frac{dh}{h}$.
\item If $r=0$, we have $d\omega=\alpha_0 \wedge \omega$ and we may take $\tilde{\alpha}=\alpha_0$.
\item If $r<0$, we see $d\omega=0$ and we may take $\tilde{\alpha}=0$.
\end{itemize}

\item{Type \bf(iii)}. There is no loss of generality in assuming that the extension is Galois, with Galois group $G$ of order $N$. 
Take traces of both sides of $d\omega=\alpha \wedge \omega$, and of $d\alpha=0$, respectively,
to obtain
\[
d\omega=\left( \frac{1}{N}\sum_{\sigma \in G}{\sigma (\alpha)} \right)\wedge \omega, \qquad
d \left(\frac{1}{N}\sum_{\sigma \in G}{\sigma (\alpha)}\right)=0.
\]
Thus we can choose
$\tilde{\alpha}=\frac{1}{N}\sum_{\sigma \in G}{\sigma (\alpha)} \in K_i$.
\end{list}
\end{proof}
This proof illustrates
the effectivity of working with power series in the generator of the
extension. For our main result below, however, we will need to use Puiseux expansions.\\
A key role will be played by {\em Darboux functions.}  These are functions of the form 
\begin{equation}\label{D1} \phi=\exp(g/f)\prod f_i^{a_i} ,\end{equation}
where the $f_i$ and $g$ and $f$ are elements of $\C[x_1,\ldots,x_n]$ and $a_i$ are complex numbers. 
Given a Darboux function $\phi$, its logarithmic differential, $d\phi/\phi$, is clearly a closed rational 1-form.
Conversely, we shall show that every closed
rational 1-form must be the logarithmic differential of some Darboux function. The case $n=2$ of the following theorem was given by Christopher \cite{Christopher1999} and, in a different context, by Schinzel \cite{Sch}.

\smallskip
\begin{theorem}\label{CChris}
Consider a 1--form $\alpha\in \mathbb C(x_1,\ldots,x_n)'$. If $\alpha$ is closed, then there exist elements
$g, f, f_i \in \mathbb C[x_1,\ldots,x_n]$ and constants
$a_i \in \C$ such that $$\alpha=d\left(\frac{g}{f}\right)+\sum a_i\frac{df_i}{f_i}.$$
\end{theorem}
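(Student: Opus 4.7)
The plan is to peel off contributions of the form $a_i\,df_i/f_i$ and $d(h/f_1^{e-1})$ one at a time, reducing $\alpha$ to a polynomial closed $1$-form — which is then $dg$ for some polynomial $g$ by the polynomial Poincar\'e lemma, provable by direct antidifferentiation variable by variable. I induct on the lexicographic pair consisting of the number of distinct irreducible factors appearing in the denominator of $\alpha$ and the maximum pole order among them.

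For the inductive step, I choose an irreducible factor $f_1$ of the denominator with maximal pole order $e$ and work at the generic point of $V(f_1)$. The localization of $\mathbb C[x_1,\ldots,x_n]$ at the prime $(f_1)$ is a DVR with uniformizer $f_1$ and residue field $\kappa = \mathrm{Frac}(\mathbb C[x_1,\ldots,x_n]/(f_1))$, and its completion is $\kappa[[f_1]]$. Every $1$-form over $\kappa((f_1))$ splits canonically into a $df_1$-part and a transverse part, so $\alpha$ expands as
\[ \alpha \;=\; \sum_{j\ge -e} a_j\,f_1^j\,df_1 \;+\; \sum_{j\ge -e} \beta_j\,f_1^j, \qquad a_j\in\kappa,\ \beta_j\in\Omega^1_{\kappa/\mathbb C}. \]
Imposing $d\alpha = 0$ and separating the $df_1$-components from the purely transverse $2$-form components in each degree yields the recursions
\[ d a_j \;=\; (j+1)\beta_{j+1}, \qquad d\beta_j \;=\; 0. \]

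Applied at $j = -e-1$, the first recursion forces $\beta_{-e} = 0$, so the leading polar contribution is purely $a_{-e} f_1^{-e}\,df_1$. When $e\ge 2$, I lift $a_{-e}$ to a rational function $\widetilde a_{-e}$ over $K$ whose denominator is a product of powers of $f_2,\ldots,f_r$ (possible because $a_{-e}$ is the reduction modulo $f_1$ of the coefficient obtained by clearing $f_1^e$ from the denominator of $\alpha$), and I subtract $d\bigl(-\widetilde a_{-e}/((e-1)f_1^{e-1})\bigr)$; the result is closed, shares the same denominator support as $\alpha$, and has pole order along $V(f_1)$ strictly smaller than $e$. When $e = 1$, the recursion at $j = -1$ gives $d a_{-1} = 0$, and since $\kappa$ is a function field over the algebraically closed $\mathbb C$, the kernel of $d$ is $\mathbb C$ itself, so $a_{-1} \in \mathbb C$; subtracting $a_{-1}\,df_1/f_1$ removes $V(f_1)$ from the pole divisor altogether. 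In either case the inductive measure strictly decreases, and at the end of the induction the accumulated exact terms can be combined over a single common denominator to yield the stated form $d(g/f) + \sum a_i\,df_i/f_i$.

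The main obstacle will be the careful verification of the third step: checking that the transverse/$df_1$ decomposition of $1$-forms over the completed local ring $\kappa[[f_1]]$ really is canonical and compatible with the exterior derivative, so that $d\alpha = 0$ cleanly separates into the two recursions displayed above. A subsidiary delicacy is ensuring the lift $\widetilde a_{-e}$ can always be chosen with denominator supported on the pre-existing $f_2,\ldots,f_r$, which is what prevents the induction from introducing fresh irreducible factors and guarantees termination.
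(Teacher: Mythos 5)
Your strategy --- peel off polar parts along one irreducible hypersurface at a time using the completed local ring at the height-one prime $(f_1)$, reduce to simple poles with constant residues, and finish with the polynomial Poincar\'e lemma --- is a genuinely different route from the paper's (which inducts on $n$ via a partial fraction expansion in $x_n$ over a splitting field of $\mathbb C(x_1,\ldots,x_{n-1})$ followed by a Galois trace). Several pieces of your argument are sound: the recursions $da_j=(j+1)\beta_{j+1}$ and $d\beta_j=0$, the conclusion $\beta_{-e}=0$, the constancy of $a_{-1}$, and the endgame. The issue you flag as the ``main obstacle'' (compatibility of the splitting with $d$) is in fact manageable: one takes a coefficient field generated by $x_1,\ldots,x_{n-1}$ and a Hensel lift of $\bar x_n$, and checks that $d$ of such elements has no $df_1$-component. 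The real gap is the one you call a ``subsidiary delicacy.'' Writing $\alpha=\sum_i h_i\,dx_i/(f_1^eQ)$ with $Q$ supported on $f_2,\ldots,f_r$, the $df_1$-coefficient is obtained by substituting $dx_n=\bigl(df_1-\sum_{i<n}\partial_{x_i}f_1\,dx_i\bigr)/\partial_{x_n}f_1$, so that $a_{-e}$ is the class of $h_n/(Q\,\partial_{x_n}f_1)$ modulo $f_1$ --- your claimed justification omits the division by $\partial_{x_n}f_1$. The class $\overline{\partial_{x_n}f_1}$ is in general not invertible in the image of $\mathbb C[x_1,\ldots,x_n][1/Q]$ in $\kappa$: already for $n=2$, $r=1$ and the singular curve $f_1=y^2-x^3$ one has $\kappa=\mathbb C(t)$ with $x=t^2$, $y=t^3$, the coordinate ring $\mathbb C[t^2,t^3]$ is not a UFD, and $\overline{\partial_yf_1}=2t^3$ is a nonunit. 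Even using $\beta_{-e}=0$ in full, which gives $a_{-e}\,\overline{\partial_{x_i}f_1}=\overline{h_i/Q}$ for every $i$, you only learn that $a_{-e}$ multiplies the Jacobian ideal of $f_1$ into the good subring; for singular $V(f_1)$ this does not force $a_{-e}$ itself into that subring (for the cusp, $t=\bar y/\bar x$ multiplies the Jacobian ideal $(t^3,t^4)$ into $\mathbb C[t^2,t^3]$ but has no lift with constant denominator). It is true that for a closed $\alpha$ the leading coefficient does admit a good lift --- but that is a consequence of the theorem you are proving, so invoking it is circular.

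Because of this, the exact form you subtract can introduce new irreducible factors into the denominator, and in addition it raises the pole order along each $f_j$, $j\ge2$, by one (since $d$ of something with denominator $f_j^{m}$ has denominator $f_j^{m+1}$), possibly past the current maximum. Hence your lexicographic measure (number of distinct factors, then maximal pole order) need not decrease, and the induction need not terminate. The paper's proof is engineered precisely to avoid this: expanding in partial fractions in the single variable $x_n$ over a splitting field makes every polar factor linear, $x_n-b_i$, whose $x_n$-derivative is $1$, so no division ever occurs and no new factors appear; the price --- coefficients in an algebraic extension $\bar K$ of $\mathbb C(x_1,\ldots,x_{n-1})$ --- is then paid off by averaging over the Galois group. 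To salvage your approach you would need either to restrict to smooth $V(f_1)$ (where a relation $\sum p_i\,\partial_{x_i}f_1\equiv 1\pmod{f_1}$ yields the good lift $\sum p_ih_i/Q$ from the identities $a_{-e}\,\overline{\partial_{x_i}f_1}=\overline{h_i/Q}$) or to supply another mechanism controlling the denominators of the lifts; as written, the argument does not close.
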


\begin{proof}
We proceed by induction on $n$. {The case $n=1$ amounts to the well-known fact that the primitive of a rational function in $x_1$ has the form $r(x_1)+\sum a_i\log(x_1-b_i)$  with $a_i,\,b_i\in\mathbb C$ and a rational function $r$.}\\ Now suppose that $n>1$ and the theorem holds for $\mathbb C(x_1,\ldots,x_{n-1})$.
Let $\bar{K}$ be a splitting field over $\mathbb C(x_1,\ldots,x_{n-1})$ of a common denominator of the coefficients of $\alpha$, { and denote the distinct roots of this common denominator  by $b_1,\ldots,b_r\in\bar{K}$.} 
Then we can write $\alpha$ as a partial fraction expansion in $x_n$ over $\mathbb C(x_1,\ldots,x_{n-1})$:
\[\alpha=\sum^{r}_{i=1}\sum^{n_i}_{j=1}\frac{a_{i,j}}{(x_n-b_i)^j}dx_n+\sum^{N}_{i=0} c_i x^{i}_{n} dx_n+ \sum^{{r}}_{i=1}\sum^{{m}_i}_{j=1}\frac{\Omega_{i,j}}{(x_n-b_i)^j}+\sum^{{M}}_{{i=0}}  x^{i}_{n} {\omega_i},\]
where the $\Omega_{i,j},\, \omega_{i}$ are elements of $\mathbb C(x_1,\ldots,x_{n-1})'$, and
$a_{i,j}$ and $c_i$ are elements of $\mathbb C(x_1,\ldots,x_{n-1})$. 

 {By evaluating $d\alpha=0$ and comparing coefficients in the partial fraction expansion we get the following for all $i,\,j\geq 0$, where it is understood that $a_{i,0}=0$ and $\Omega_{i,0}=0$:}
\begin{align}
d c_i{-}(i+1)\omega_{i+1}=&0,\label{nC3}\\
d a_{i,{j+1}}{+}j a_{i,j}d b_i{-}j\Omega_{i,j}=&0,\label{nC4}\\
d\omega_i =&0,\label{nC5}\\
d\Omega_{i,j+1}+jd b_i \wedge \Omega_{i,j}=&0.\label{nC6}
\end{align}

{These may be seen as identities in $\mathbb C(x_1,\ldots,x_{n-1})'$. In particular, $d a_{i,1}=0$, so $a_{i,1}\in \mathbb C$.}
From  \eqref{nC5}  $d\omega_0=0$ and hence by hypothesis we can write
\[
\omega_0=d\left(\frac{\tilde{g}}{\tilde{f}}\right)+\sum \tilde{a}_i\frac{d\tilde{f}_i}{\tilde{f}_i},
\]
for some $\tilde{g}, \tilde{f}, \tilde{f}_i \in \mathbb C(x_1,\ldots,x_{n-1})$ and $\tilde{a}_i \in \mathbb C$.
Equations \eqref{nC3} -- \eqref{nC6}  allow us to write 

\begin{equation} \label{nalpha-omega}
\begin{array}{rcl}
    \alpha-\omega_0&=&\displaystyle{\sum\limits_i a_{i,1}\dfrac{d(x_n-b_{i})}{(x_n-b_{i})}+\sum\limits_{j>1}\sum\limits_i d\left(\dfrac{a_{i,j}}{(x_n-b_i)^{j-1}}\left(\frac{-1}{j-1}\right)\right)}\\
     & & +\displaystyle{\sum\limits_i d\left(\dfrac{c_{i}\,x^{i+1}_{n}}{i+1}\right)}.
\end{array}
\end{equation}
Now let $G$ be the Galois group of $\bar{K}$ over $\mathbb C(x_1,\ldots,x_{n-1})$. For any differential form $\mu$ over $\bar{K}$ and $\sigma\in G$ we denote by $\sigma(\mu)$ the form obtained by letting $\sigma$ act on its coefficients. Taking the trace of both sides
of equation \eqref{nalpha-omega}  {and noting that $\sigma$ and the exterior derivative commute}, we have
\begin{equation}
\begin{array}{rcl}
\dfrac{1}{|G|}\displaystyle{\sum\limits_{\sigma \in G}\sigma(\alpha-\omega_0)}&=&\dfrac{1}{|G|}\displaystyle{\sum\limits_{\sigma \in G}\sum a_{i,1}\dfrac{d(x_n-\sigma{(b_{i})})}{(x_n-\sigma{(b_{i})})}}\\
&& +\dfrac{1}{|G|}\displaystyle{\sum\limits_{\sigma \in G}\sum\sum d\left(\dfrac{\sigma{(a_{i,j})}}{(x_n-\sigma{(b_i}))^{j-1}}\left(\dfrac{-1}{j-1}\right)\right)}\\
&&+\dfrac{1}{|G|}\displaystyle{\sum\limits_{\sigma \in G}\sum d \left(\dfrac{\sigma(c_{i})\,x^{i+1}_{n}}{i+1}\right)}.
\end{array}
\end{equation}
Since $G$ is the set of all automorphisms of $\bar{K}$ fixing $\mathbb C(x_1,\ldots,x_{n-1})$, the left hand side of this equation is equal to $\alpha-\omega_0$, and we obtain $\alpha$ in the desired form.
\end{proof}

\begin{remark}\label{LD}\textup{
Combining Theorem~\ref{MSing} and Theorem~\ref{CChris}, we see that a 1-form $\omega$
is Liouvillian integrable if and only if it admits a Darboux integrating factor.
}
\end{remark}

\section{Extension of Singer's theorem to vector fields in higher dimensions}

We now consider rational vector fields
\begin{equation}\label{vfieldnd}
\mathcal{X}=\sum_{i=1}^nP_i\frac{\partial }{\partial x_i}
\end{equation}
on $\mathbb C^n$, $n\geq 3$; equivalently the corresponding ($n-1$)-forms
\begin{equation}\label{n-1form}
\Omega=\sum_{i=1}^nP_i\,dx_1\wedge\cdots\wedge\widehat{dx_i}\wedge\cdots dx_n
\end{equation}
defined over $K=\mathbb C(x_1,\ldots,x_n)$.
We will state and prove a weaker version of Singer's theorem that holds for any dimension $n\geq 3$.

\begin{definition}\textup{
A non-constant
element, $\phi$, of a Liouvillian extension of $\mathbb{C}(x_1,\ldots,x_n)$ is called a {\it Liouvillian first integral}  of the vector field $\mathcal{X}$ if it satisfies $\mathcal{X} \phi=0$ or, equivalently, $d\phi \wedge  \Omega=0$. }
\end{definition}

\begin{remark}\textup{ In view of Remark \ref{liourem}, this property is equivalent to the existence of some Liouvillian extension $L$ of $K$ and one-forms $\omega\not=0$, $\alpha$ in $L'$ such that 
\begin{equation}\label{twoformintcond}
    \omega\wedge\Omega=0,\quad d\omega=\alpha\wedge\omega, \quad d\alpha=0.
\end{equation}
 In this case we will briefly (and slightly abusing language) say that $\Omega$ is {\em Liouvillian integrable over $L$}.  Hence,
 we have a first integral of $\Omega$ of the form 
 $\int(\omega\cdot\exp(-\int\!\alpha))$.}
\end{remark}
\subsection{The main result}
Our principal result states that Liouvillian integrability of $\Omega$ over some extension $L$, while not necessarily implying Liouvillian integrability over $K$, does imply Liouvillian integrability over a finite algebraic extension of $K$.
\begin{theorem}[Extension of Singer's theorem to $n$ dimensions]\label{algaddendum}
Let  $\Omega$ be the ($n-1$)--form {\eqref{n-1form} }over $K = \C(x_1,\ldots,x_n)$. If there exists a Liouvillian first integral of $\Omega$, then there exists a finite algebraic extension $\widetilde K$ of $K$, and $\omega, \alpha\in\widetilde K'$, $\omega\not=0$, such that 
\eqref{twoformintcond} holds.
\end{theorem}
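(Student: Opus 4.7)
The plan is to induct on the number of non-algebraic (Type (i) or Type (ii)) steps in a Liouvillian tower $K = K_0 \subset K_1 \subset \cdots \subset K_m = L$ witnessing the Liouvillian first integral; by the remark preceding the subsection this furnishes forms $\omega, \alpha \in L'$ with $\omega \neq 0$ satisfying \eqref{twoformintcond}. Since a finite algebraic extension of a finite algebraic extension is still finite algebraic, Type (iii) steps can be absorbed freely, so the base case (all steps algebraic) gives $\widetilde K = L$ directly. For the inductive step it therefore suffices to show: given a Type (i) or Type (ii) extension $K_{j+1} = K_j(t)$ with $t$ transcendental over $K_j$, and a pair $(\omega, \alpha)$ in a finite algebraic extension of $K_{j+1}$ satisfying \eqref{twoformintcond}, one can produce a pair $(\tilde\omega, \tilde\alpha)$ in a finite algebraic extension of $K_j$ satisfying \eqref{twoformintcond} with $\tilde\omega \neq 0$.

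The one-step descent will follow the series expansion strategy used in the proof of Theorem~\ref{MSing}, but now applied to $\omega$ and $\alpha$ simultaneously. I would expand each as a formal series in $t$, using Laurent expansions when these suffice and Puiseux expansions (with fractional exponents) when not. The constraint $\omega \wedge \Omega = 0$ decouples across coefficients since $\Omega$ is independent of $t$, so each series coefficient $\omega_\nu$ already satisfies $\omega_\nu \wedge \Omega = 0$. The identities $d\alpha = 0$ and $d\omega = \alpha \wedge \omega$ translate into recursive relations on the coefficients of $\omega$ and $\alpha$, to be analyzed case-by-case according to the sign of the leading $t$-order of $\alpha$. The zero- and negative-order subcases proceed essentially as in Theorem~\ref{MSing} and yield $(\tilde\omega, \tilde\alpha)$ drawn from the constant term in $t$, with $\tilde\alpha$ corrected by a suitable multiple of $\delta$ (for Type~(i)) if necessary.

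The main obstacle, and the reason a genuine algebraic extension of $K$ may be forced in higher dimension, is the subcase where the leading $t$-order $s$ of $\alpha$ is strictly positive. Matching top-order terms in $d\omega = \alpha \wedge \omega$ then produces the relation $\alpha_s \wedge \omega_r = 0$, where $\omega_r$ is the leading coefficient of $\omega$. In the 1-form setting of Theorem~\ref{MSing}, with $\omega$ fixed and rational, this immediately yielded a closed rational $\tilde\alpha$ of the form $-dh/h$. Here $\omega$ is itself subject to descent and cannot simply be normalized away, and resolving the ambiguity in $\omega_r$ consistently with both $\omega_r \wedge \Omega = 0$ and the derivation relation may require separating algebraic branches; this is precisely where Puiseux expansions enter, producing a descended coefficient defined over a finite algebraic extension of $K_j$. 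The remaining task is the bookkeeping showing that each one-step descent introduces only a finite algebraic extension and that all three relations in \eqref{twoformintcond} are preserved, so that composing finitely many such descents yields $\widetilde K$ as a finite algebraic extension of $K$.
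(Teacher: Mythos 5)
Your overall architecture is exactly the paper's: the descent is organized around a one-step lemma (Lemma~\ref{extensions}) that eliminates the topmost transcendental extension, with Type~(iii) steps absorbed into finite algebraic extensions, and the descent is carried out by expanding $\omega$ and $\alpha$ in descending powers of $\tau=t^{1/m}$ via Lemma~\ref{puislem} and comparing coefficients in $\omega\wedge\Omega=0$, $d\omega=\alpha\wedge\omega$, $d\alpha=0$. However, the case you correctly single out as the main obstacle --- leading order $s>0$ of $\alpha$ --- is left unresolved, and what you sketch for it would not work. From $\alpha_s\wedge\omega_r=0$ the paper does not try to ``resolve the ambiguity in $\omega_r$'' or separate branches; the key move is to \emph{discard $\omega_r$ and take $\widetilde\omega:=\alpha_s$ instead}. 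Since $\alpha_s=h\,\omega_r$ for some $h$ in the coefficient field, $\alpha_s\wedge\Omega=0$ follows from $\omega_r\wedge\Omega=0$; and the leading coefficient of the closed form $\alpha$ carries its own integrability relation, namely $d\alpha_s+\frac{s}{m}\,\delta\wedge\alpha_s=0$ in Type~(i) (so $\widetilde\alpha=-\frac{s}{m}\delta$) and $d\alpha_s=0$ in Type~(ii) (so $\widetilde\alpha=0$). Without this substitution the descent step does not close, because $\omega_r$ itself need not satisfy any usable equation in this case; this is an essential idea, not bookkeeping.

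Two further imprecisions. First, you attribute the need for Puiseux (rather than Laurent) expansions, and hence the algebraic extension of $K$, to the $s>0$ case; in fact the fractional exponents and the finite extension $\widetilde L_0$ of the base arise already from Lemma~\ref{puislem}, i.e.\ from expanding elements of a finite \emph{algebraic} extension of $L_0(t)$ in powers of $t$ --- they are needed uniformly in all cases whenever Type~(iii) steps sit above the transcendental one. Second, in the $s\le 0$ cases the descended $\widetilde\omega$ is the \emph{leading} coefficient $\omega_r$ of $\omega$, not ``the constant term in $t$''; only $\alpha$ contributes its $\tau^0$ coefficient (when $s=0$), and in Type~(i) one must also verify that the leading term of $d\omega$ is $\tau^r\bigl(d\omega_r+\frac{r}{m}\delta\wedge\omega_r\bigr)$, which is what produces the correction $-\frac{r}{m}\delta$ in $\widetilde\alpha$.
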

Before proving this theorem, we state two lemmas. The proof of the first is straightforward.
\begin{lemma}\label{omegamultilem}
    Let $L$ be a differential extension of $K=\mathbb C(x_1,\ldots,x_n)$, moreover $0\not=\ell\in L$, $0\not=\omega\in L'$ and $\alpha\in L'$ such that 
    $d\omega =\alpha\wedge\omega$, $d\alpha=0$. Then
    \begin{equation}\label{Liouv1}
d\left(\frac{\omega}{\ell}\right)=\left(\alpha-\frac{d \ell}{\ell}\right) \wedge \frac{\omega}{\ell},\quad d\left(\alpha-\frac{d \ell}{\ell}\right) =0.
\end{equation}
\end{lemma}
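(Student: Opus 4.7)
The plan is a direct computation: both identities follow from the Leibniz rule for the exterior derivative together with the hypotheses $d\omega = \alpha\wedge\omega$ and $d\alpha = 0$. There is no real obstacle — the main thing is to keep track of signs and of the factor $\ell^{-1}$ when moving it through a wedge product with a 1-form.

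For the first identity, I would rewrite $\omega/\ell$ as $\ell^{-1}\omega$ and apply the Leibniz rule:
\[
d(\ell^{-1}\omega) = d(\ell^{-1})\wedge\omega + \ell^{-1}\,d\omega = -\ell^{-2}\,d\ell\wedge\omega + \ell^{-1}\,\alpha\wedge\omega,
\]
where the last equality uses $d(\ell^{-1}) = -\ell^{-2}\,d\ell$ and the hypothesis $d\omega = \alpha\wedge\omega$. Factoring out $\ell^{-1}$ on the right and rewriting $-\ell^{-2}\,d\ell\wedge\omega = -(d\ell/\ell)\wedge(\omega/\ell)$ and $\ell^{-1}\alpha\wedge\omega = \alpha\wedge(\omega/\ell)$ gives precisely
\[
d\!\left(\frac{\omega}{\ell}\right) = \left(\alpha - \frac{d\ell}{\ell}\right)\wedge\frac{\omega}{\ell}.
\]

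For the second identity, $d(\alpha - d\ell/\ell) = d\alpha - d(d\ell/\ell)$. The first term vanishes by hypothesis. For the second term, applying Leibniz again together with $d\circ d = 0$:
\[
d\!\left(\frac{d\ell}{\ell}\right) = d(\ell^{-1})\wedge d\ell + \ell^{-1}\,d(d\ell) = -\ell^{-2}\,d\ell\wedge d\ell + 0 = 0,
\]
since any 1-form wedged with itself is zero. This concludes the proof. I would present both computations in a single short paragraph, since each is a one-line manipulation, and note explicitly that the closedness of $\alpha - d\ell/\ell$ is just the statement that $d\ell/\ell$ is (formally) a logarithmic differential, hence closed.
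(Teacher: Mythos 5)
Your computation is correct and is exactly the straightforward Leibniz-rule argument the paper has in mind (the paper omits the proof, remarking only that it is straightforward). Nothing is missing; both identities check out, including the sign handling and the observation that $d\ell\wedge d\ell=0$ kills the remaining term in the closedness claim.
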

\begin{lemma}\label{extensions}
    Let $L_0$ be a differential extension of {$K=\mathbb C(x_1,\ldots, x_n)$}, $t$ transcendental over $L_0$ such that $L_0(t)$ is Liouvillian over $L_0$, and $L$ a finite algebraic extension of $L_0(t)$. Assume that $L_0$ and $L$ have the same constants.\\
    If there exist $\omega,\alpha\in L'$ such that $\omega\not=0$, $\omega\wedge\Omega=0$, $d\omega=\alpha\wedge \omega$, $d\alpha=0$ (so $\Omega$ is Liouvillian integrable over $L$),
    then there exists a finite algebraic extension $\widetilde L_0$ of $L_0$, and $\widetilde\omega,\widetilde\alpha\in \widetilde L_0'$ such that $\widetilde\omega\not=0$, $\widetilde\omega\wedge\Omega=0$, $d\widetilde\omega=\widetilde\alpha\wedge \widetilde\omega$, $d\widetilde\alpha=0$ (so $\Omega$ is Liouvillian integrable over $\widetilde L_0$).
\end{lemma}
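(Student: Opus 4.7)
The plan is to embed $L$ into a Puiseux series field over a finite algebraic extension of $L_0$ and to read off the descended data from the leading Laurent coefficients of $\omega$ and $\alpha$. Since $t$ is transcendental and $L_0(t)$ is Liouvillian over $L_0$, two cases arise: either $dt/t=\delta\in L_0'$ (Type~(i)) or $dt=\delta\in L_0'$ (Type~(ii)), with $\delta$ closed. The hypothesis $C_L=C_{L_0}$ rules out $\delta=0$, since otherwise $t$ would be a new constant.

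First, by Puiseux's theorem I choose a positive integer $N$ and a finite algebraic extension $\widetilde L_0$ of $L_0$ such that $L$ embeds into $\widetilde L_0((s))$, where $s=t^{1/N}$ in Type~(i) and $s=t^{-1/N}$ in Type~(ii). These asymmetric choices of the expansion centre make $d$ behave well on the Laurent grading: in Type~(i) one computes $ds/s=\delta/N\in\widetilde L_0'$ closed, so $d$ acts diagonally on $s$-monomials; in Type~(ii) one computes $ds=-(\delta/N)s^{N+1}$, so $d$ shifts the $s$-valuation upward by $N$ and in particular never lowers it. In both cases $d$ preserves the filtration by nonnegative valuation. This asymmetric choice of expansion point is, in my view, the heart of the proof: a naive expansion at $t=0$ in Type~(ii) would let $d$ decrease valuation, and the lowest-order analysis below would collapse.

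Next I write $\omega=\sum_i\omega_i s^i$ and $\alpha=\sum_j\alpha_j s^j$ with $\omega_i,\alpha_j\in\widetilde L_0'$. Since $\Omega\in K'$ is $s$-independent, the condition $\omega\wedge\Omega=0$ forces $\omega_i\wedge\Omega=0$ for every $i$. Using Lemma~\ref{omegamultilem} with $\ell$ an appropriate integer power of $s$, I normalize so that the $s$-valuation of $\omega$ is $0$, i.e.\ $\omega_0\ne 0$; the corresponding adjustment of $\alpha$ adds only a closed Laurent correction ($-(r/N)\delta$ in Type~(i), $(r/N)\delta\,s^N$ in Type~(ii)), and all hypotheses are preserved.

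Finally I equate coefficients of $d\omega=\alpha\wedge\omega$ and $d\alpha=0$ at the lowest relevant Laurent orders. If $\alpha$ has nonnegative $s$-valuation, the $s^0$-coefficients give $d\omega_0=\alpha_0\wedge\omega_0$ and $d\alpha_0=0$ immediately, and $\widetilde\omega:=\omega_0$, $\widetilde\alpha:=\alpha_0$ works. Otherwise $\alpha$ has lowest term $\alpha_{j_0}s^{j_0}$ with $j_0<0$; because $d$ does not decrease valuation on $\omega$, the $s^{j_0}$-coefficient of $d\omega$ vanishes, forcing $\alpha_{j_0}\wedge\omega_0=0$ and hence $\alpha_{j_0}=h\,\omega_0$ for some nonzero $h\in\widetilde L_0$. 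Feeding this back into the $s^{j_0}$-coefficient of $d\alpha=0$ yields $d\omega_0=-(dh/h+c\,\delta)\wedge\omega_0$, where $c$ is an explicit constant determined by $j_0$, $N$ and the type (in fact $c=j_0/N$ in Type~(i) and $c=0$ in Type~(ii)); the closed form $\widetilde\alpha:=-dh/h-c\,\delta\in\widetilde L_0'$, together with $\widetilde\omega:=\omega_0\in\widetilde L_0'$, completes the descent.
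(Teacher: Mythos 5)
Your argument is correct and is essentially the paper's own proof: both rely on Lemma~\ref{puislem} to expand $\omega$ and $\alpha$ as Puiseux--Laurent series over a finite algebraic extension $\widetilde L_0$, exploit that $d$ respects the grading in Type~(i) and does not lower the relevant valuation in Type~(ii), and then split into cases according to the order of $\alpha$ to read off $\widetilde\omega,\widetilde\alpha$ from leading coefficients. The only deviations are cosmetic: you first normalize the valuation of $\omega$ to zero via Lemma~\ref{omegamultilem} and keep $\widetilde\omega=\omega_0$ in the case where $\alpha$ has a pole (deriving $\widetilde\alpha=-dh/h-c\,\delta$ from $\alpha_{j_0}=h\,\omega_0$), whereas the paper works directly with the unnormalized leading coefficients and there takes $\widetilde\omega=\alpha_s$ with $\widetilde\alpha$ a multiple of $\delta$; both choices give the required descended data.
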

\begin{proof}
    By Lemma \ref{puislem} (see appendix) there exists a finite extension $\widetilde L_0\supset L_0$ so that we may write $\omega, \,\alpha$
as formal Laurent series in decreasing powers of $\tau= t^{1/m}$ with some positive integer $m$, thus
\begin{equation}
    \omega =\omega_r \tau^r +\omega_{r-1} \tau^{r-1} \ldots, \quad \omega_k \in \widetilde L_0' \  (k\le r), \quad \omega_r \neq 0,
\end{equation}
and either $\alpha=0$ or
\begin{equation}
      \alpha =\alpha_s \tau^s +\alpha_{s-1} \tau^{s-1} \ldots, \quad \alpha_k \in \widetilde L_0'\  (k\le s), \quad \alpha_s \neq 0.
\end{equation}
With $t$ transcendental, we have  $\omega \wedge \Omega=0$, hence $\omega_k \wedge \Omega=0$ for all $k$.\\
We now consider the types of transcendental extensions in Definition \ref{LiouvEx}.
\begin{itemize}
\item{Type \bf (i)}. Let $dt=t\delta$ with $d\delta=0$, hence $
d\tau=\frac1m\tau\delta.
$
We thus obtain the highest degree terms
\begin{equation*}
    d\omega=\tau^r\left(\frac rm\delta\wedge\omega_r+d\omega_r\right)+\cdots,\quad d\alpha=\tau^s\left(\frac sm \delta\wedge\alpha_s+d\alpha_s\right)+\cdots
\end{equation*}
unless $\alpha=0$. Comparing both sides of $\alpha \wedge \omega=d\omega$ yields the following three cases:
\begin{itemize}
\item  When $\alpha=0$ or $s<0$ (thus the highest degree on the left hand side is $<r$), we just have $d\omega_r+\frac rm\, \delta \wedge \omega_r=0$. In this case choose $\widetilde{\alpha}=-\frac rm\, \delta$ (with $d\widetilde{\alpha}=0$)
and $\widetilde{\omega}=\omega_r$.
\item When $s=0$, we see $\alpha_0 \wedge \omega_r=d \omega_r + \frac rm \, \delta \wedge \omega_r$. In this case take $\widetilde{\alpha}=\alpha_0 - \frac rm \, \delta$ (noting $d\alpha_0=0$)
and $\widetilde{\omega}=\omega_r$. 
\item When $s>0$, we get $\alpha_s \wedge \omega_r=0$ and therefore
$\alpha_s= h\, \omega_r$ for some $h\in \widetilde L_0$. Since $\omega_r \wedge \Omega=0$, then
$\alpha_s \wedge \Omega=0$. Moreover $d\alpha_s + \frac sm \, \delta \wedge \alpha_s=0$ from $d\alpha=0$. So we may choose $\widetilde{\alpha}=- \frac sm \, \delta$
(with $d\tilde{\alpha}=0$) and $\widetilde{\omega}=\alpha_s$.
\end{itemize}
\item{Type \bf (ii)}. Here we have $t$ transcendental over $\widetilde L_0$, $dt=\delta$, with $d\delta=0$. Therefore $d\tau=\frac{1}{m} \tau^{1-m}\delta$, hence 
\begin{equation*}
    d(\tau^r\omega_r)= \tau^r d\omega_r+\frac rm \tau^{r-m}\delta\wedge\omega_r, 
\end{equation*}
which shows that the leading term of $d\omega$ is just $\tau^r d\omega_r$. Likewise, the leading term of $d\alpha$ equals $\tau^s d\alpha_s$ unless $\alpha=0$.
Comparing the leading terms of $\alpha \wedge \omega=d\omega$,
we obtain three cases:
\begin{itemize}
\item When $s>0$, we get $\alpha_s \wedge \omega_r=0$ and hence $\alpha_s= h\cdot \omega_r$ for some $h\in \widetilde L_0$.
Since $\omega_r \wedge \Omega=0$, then $\alpha_s \wedge \Omega=0$. From
$d\alpha=0$ one sees $d\alpha_s=0$. In this case take $\widetilde{\alpha}=0$ and $\widetilde{\omega}=\alpha_s$.
\item When $s=0$, we see $\alpha_0 \wedge \omega_r=d \omega_r$, and $d\alpha_0=0$ from $d\alpha=0$.
In this case choose $\widetilde{\alpha}=\alpha_0$
and $\widetilde{\omega}=\omega_r$.
\item When {$\alpha=0$ or $s<0$}, then $d \omega_r=0$. Take $\widetilde{\alpha}=0$
and $\widetilde{\omega}=\omega_r$.
\end{itemize}
\end{itemize}
\end{proof}

The following is now a direct consequence of Lemma \ref{extensions}. 

\begin{proof}[Proof of Theorem \ref{algaddendum}]
    Consider a tower 
    \begin{equation*} K=K_0 \subset K_1 \subset \ldots \subset K_m=L, \end{equation*}
    as in Definition \ref{LiouvEx} (or Remark \ref{LiouvEx2}), and assume that for some $i>1$ there exists a finite algebraic extension $K_{i+1}\supset K_i$, and $\omega$, $\alpha\in K_{i+1}'$ satisfying the conditions in \eqref{twoformintcond}. With no loss of generality, $K_i\supset K_{i-1}$ is then transcendental, and Lemma \ref{extensions} shows that there exists a finite algebraic extension $\widetilde K_{i-1}$ of $K_{i-1}$, and $\widetilde\omega,\,\widetilde\alpha\in K_{i-1}'$ as required in \eqref{twoformintcond}.
    Thus all transcendental extensions can be eliminated by descent. 
\end{proof}

Thus, the vector field admits a first integral obtained, via \eqref{twoformintcond}, from integrating the 1-forms, $\omega$
and $\alpha$, defined over $\widetilde{K}$.  That is, 
there is a first integral of the form $\phi=\int{\frac{\omega}{e^{\int{\alpha}}}}$, with $\omega,\,\alpha\in \widetilde{K}'$.  Such an integral will normally be multivalued.

If the conclusion of the theorem holds with $\widetilde K=K$, 
then $e^{\int{\alpha}}$ is of Darboux type by Theorem~\ref{CChris}.  (When $\alpha=0$, we just have a first integral of the form $\int\omega$.)
If there do not exist $\omega$ and $\alpha$ in $K'$ itself that satisfy \eqref{twoformintcond} then we shall call $\Omega$ {\em exceptional}.
An extensive discussion of these exceptional cases will be the subject of a forthcoming paper \cite{CPWLiou2} by Christopher et al.  Here we just note that such cases exist and have a very nice group structure.  In particular, for dimension $n=3$ a correspondence exists between exceptional cases and the finite rotation groups of the sphere.

\subsection{Three dimensional vector fields with Liouvillian first integrals}\label{smain}

In the exceptional cases, further reductions are possible depending on the dimension of the vector field and the size
of the extension $[\widetilde{K}:K]$.   We give some details 
in the particular case when $n=3$, leaving the discussion of higher dimensions to the forthcoming
paper \cite{CPWLiou2}.
Consider the vector field
\begin{equation}\label{vfield3d}
\mathcal{X}=P\frac{\partial }{\partial x}+Q\frac{\partial }{\partial y}+R\frac{\partial }{\partial z}
\end{equation}
in $\mathbb C^3$ with the corresponding 2-forms
\begin{equation}\label{2form}
\Omega=P\,dy \wedge dz+Q\,dz \wedge dx+R\,dx \wedge dy.
\end{equation}

By Theorem \ref{algaddendum} there exists an integral $\omega\in L'$, with $L$ a finite algebraic extension of $K$ with $d\omega = \alpha\wedge\omega$, $\alpha\in L'$, $d\alpha = 0$. Without loss of generality we can assume that $L$ is Galois over $K$ with Galois group $G$. 

Let us first assume that $\sigma(\omega) \wedge \omega=0$ for all $\sigma \in G$. We choose $\eta$, $\theta\in K'$ such that $\eta\wedge\Omega=\theta\wedge\Omega=0$, $\eta\wedge\theta\not=0$. Then there exist $k$, $\ell\in L$ such that $\omega=k\eta+\ell\theta$.
With Lemma \ref{omegamultilem} one sees that 
\begin{equation*}
    \widetilde\omega:=\dfrac{\omega}{k}=\eta+\widetilde \ell\theta
\end{equation*}
satisfies $\widetilde\omega\wedge\Omega=0$, and $d\widetilde\omega=\widetilde\alpha\wedge\widetilde\omega$, $d\widetilde\alpha=0$ with $\widetilde\alpha = \alpha - dk/k \in L'$. Since $\sigma(\widetilde\omega)\wedge\widetilde\omega=0$ for all $\sigma\in G$, we have $\sigma(\widetilde\omega)\wedge\widetilde\omega=\left(\ell-\sigma(\ell)\right)\,\eta\wedge\theta$, so that $\sigma(\widetilde\ell)=\widetilde\ell$ for all $\sigma\in G$, and $\widetilde\omega\in K'$. Finally, forming the trace of $d\widetilde\omega=\widetilde\alpha\wedge\widetilde\omega$ shows that one may take $\widetilde\alpha\in K'$.  This case
is not an exceptional one as we can choose $\omega$ and
$\alpha$ to be in $K'$.

Assume now that $\tau(\omega) \wedge \omega \neq 0$ for some $\tau \in G$. In this case there exist two independent Liouvillian first integrals, with differentials $\omega$ and $\tau(\omega)$. Since $\tau(\omega) \wedge \Omega = \omega \wedge
\Omega = 0$, we must have $\tau(\omega) \wedge \omega = \ell\, \Omega$ for some $\ell \in L$.  Taking differentials, we obtain
\[d\Omega = (\alpha+\tau(\alpha)-\frac{d\ell}{\ell})\wedge \Omega,\]
with $d(\alpha+\tau(\alpha)-\frac{d\ell}{\ell}) = 0$.  From the trace of the equation above, we find a $\beta\in K'$ such that $d\Omega = \beta\wedge\Omega$ and $d\beta=0$.  Thus, there
exists an inverse Jacobi multiplier of the form $e^{\int \beta}$.  By Theorem~\ref{CChris}, this must be of Darboux type. We have shown:

\begin{theorem}[Extension of Singer's theorem for three dimensional vector fields]\label{Singer}
Let $K = \C(x,y,z)$, and let $\Omega$ be the 2--form \eqref{2form} over $K$. If there exists a Liouvillian first integral of $\Omega$, then one of the following holds:
\begin{trivlist}{}{}

\item{\textup{(I)}} There exist 1--forms $\omega, \alpha \in K'$ such that
\[
\omega\not=0,\,
\omega \wedge \Omega=0,\: \alpha \wedge \omega=d\omega, \: d\alpha=0.
\]
That is, $\omega$ is Darboux integrable over $K$, and $\Omega$
has a first integral of the form $\phi=\int\frac{\omega}{m}$ for some Darboux function $m = e^{\int{\alpha}}$ over $K$.
 
\item{\textup{(II)}} There exists a 1--form $\beta \in K'$
such that $d\Omega = \beta \wedge \Omega$ with $d\beta=0$. 
So, $\Omega$ admits an inverse  Jacobi multiplier 
$m = e^{\int{\beta}}$ of Darboux type over $K=\C(x,y,z)$. \footnote{For the notion of inverse Jacobi multiplier see Berrone and Giacomini \cite{BeGia}. Note that we include nonzero constant functions as multipliers.}  
\end{trivlist}{}{}
\end{theorem}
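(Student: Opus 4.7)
My plan is to apply Theorem~\ref{algaddendum} to obtain a finite algebraic extension $L$ of $K$ together with one-forms $\omega,\alpha\in L'$ satisfying $\omega\neq 0$, $\omega\wedge\Omega=0$, $d\omega=\alpha\wedge\omega$, and $d\alpha=0$. Enlarging $L$ if necessary, I may assume the extension is Galois with group $G$. The two conclusions of the theorem will correspond to a dichotomy on whether the Galois conjugates of $\omega$ stay proportional to $\omega$: either $\sigma(\omega)\wedge\omega=0$ for every $\sigma\in G$ (Case 1, yielding conclusion (I)), or $\tau(\omega)\wedge\omega\neq 0$ for some $\tau\in G$ (Case 2, yielding conclusion (II)).

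For Case 1 the strategy is Galois descent. Over $L$ the one-forms annihilating $\Omega$ form a rank-two module, for which one can pick a basis $\eta,\theta$ with coefficients already in $K$. Writing $\omega=k\eta+\ell\theta$ with $k,\ell\in L$ and rescaling by $1/k$, Lemma~\ref{omegamultilem} shows that $\widetilde\omega=\eta+\widetilde\ell\,\theta$ still satisfies the integrability conditions with some modified closed $\widetilde\alpha\in L'$. The hypothesis $\sigma(\widetilde\omega)\wedge\widetilde\omega=0$ expands to $(\widetilde\ell-\sigma(\widetilde\ell))\,\eta\wedge\theta=0$, forcing $\widetilde\ell\in K$ and hence $\widetilde\omega\in K'$. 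Finally, taking the Galois trace of $d\widetilde\omega=\widetilde\alpha\wedge\widetilde\omega$ replaces $\widetilde\alpha$ by a $G$-invariant closed one-form in $K'$, delivering (I).

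For Case 2 both $\omega$ and $\tau(\omega)$ annihilate $\Omega$, so in the three-dimensional setting they span the full rank-two annihilator module, and their wedge must be a scalar multiple of $\Omega$: I can write $\tau(\omega)\wedge\omega=\ell\,\Omega$ with $0\neq\ell\in L$. Differentiating this identity via $d\omega=\alpha\wedge\omega$ and $d\tau(\omega)=\tau(\alpha)\wedge\tau(\omega)$ (using that Galois action commutes with $d$) produces
\[
d\Omega=\Bigl(\alpha+\tau(\alpha)-\frac{d\ell}{\ell}\Bigr)\wedge\Omega,
\]
with the parenthesized one-form being closed. Averaging this relation over $G$, using that $\Omega$ and $d\Omega$ lie in $K'$ and are therefore $G$-invariant, yields a closed $\beta\in K'$ with $d\Omega=\beta\wedge\Omega$, which is (II); by Theorem~\ref{CChris} the associated integrating factor $e^{\int\beta}$ is automatically of Darboux type.

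The main conceptual obstacle I anticipate is the dichotomy itself. Naively averaging $\omega$ over $G$ can fail to produce a form in $K'$ when the conjugates are not proportional, so the descent in Theorem~\ref{algaddendum} alone does not rule out extensions. The insight is that non-proportional conjugates encode a second, independent Liouvillian first integral, and this richness is precisely what forces the phenomenon to transfer one step upward — from an integrating factor for $\omega$ to an inverse Jacobi multiplier for $\Omega$ itself. This asymmetry between the two cases is exactly what distinguishes the three-dimensional situation from the one-form case handled in Theorem~\ref{MSing}.
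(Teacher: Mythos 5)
Your proposal is correct and follows essentially the same route as the paper's own proof: invoke Theorem~\ref{algaddendum}, pass to a Galois closure, and split on whether all conjugates $\sigma(\omega)$ are proportional to $\omega$, using the basis $\eta,\theta\in K'$ with Lemma~\ref{omegamultilem} and a trace for descent in the first case, and the identity $\tau(\omega)\wedge\omega=\ell\,\Omega$ followed by differentiation and a trace in the second. No substantive differences to report.
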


The proof of the theorem also shows:
\begin{corollary}
Let $K = \C(x,y,z)$, and let $\Omega$ be the 2--form \eqref{2form} over $K$. If $\Omega$ admits a Liouvillian first integral, but not two independent Liouvillian first integrals, then (I) holds.
\end{corollary}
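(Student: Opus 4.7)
The plan is to verify the corollary by contrapositive, reading off the dichotomy already established in the proof of Theorem \ref{Singer}. Concretely, I would show: if case (I) fails, then $\Omega$ admits two functionally independent Liouvillian first integrals.

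Starting from the given Liouvillian first integral, Theorem \ref{algaddendum} produces $\omega,\alpha\in L'$ with $L$ a finite algebraic extension of $K$ satisfying \eqref{twoformintcond}; enlarging $L$ if necessary, I may assume $L/K$ is Galois with Galois group $G$. The proof of Theorem \ref{Singer} showed that whenever $\sigma(\omega)\wedge\omega=0$ for all $\sigma\in G$, one can descend to $\widetilde\omega,\widetilde\alpha\in K'$ satisfying \eqref{twoformintcond}, i.e., case (I) holds. So if (I) fails there must exist $\tau\in G$ with $\tau(\omega)\wedge\omega\neq 0$, and my task reduces to producing two independent Liouvillian first integrals in this situation.

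Since $\Omega\in K'$ is fixed by $\tau$, applying $\tau$ to $\omega\wedge\Omega=0$, $d\omega=\alpha\wedge\omega$, $d\alpha=0$ yields $\tau(\omega)\wedge\Omega=0$, $d\tau(\omega)=\tau(\alpha)\wedge\tau(\omega)$, $d\tau(\alpha)=0$. By Remark \ref{liourem}, both $\omega$ and $\tau(\omega)$ admit inverse integrating factors $m,\widetilde m$ in Liouvillian extensions of $K$, with $\omega=m\,d\phi$ and $\tau(\omega)=\widetilde m\,d\widetilde\phi$ for Liouvillian first integrals $\phi,\widetilde\phi$ of $\Omega$. Independence follows at once from
\begin{equation*}
m\,\widetilde m\,d\phi\wedge d\widetilde\phi \;=\; \omega\wedge\tau(\omega) \;\neq\; 0,
\end{equation*}
which forces $d\phi\wedge d\widetilde\phi\neq 0$ and hence functional independence of $\phi$ and $\widetilde\phi$. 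This contradicts the hypothesis of the corollary, so case (I) must hold.

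There is no substantive obstacle here beyond bookkeeping: the Galois descent argument is already contained verbatim in the proof of Theorem \ref{Singer}, and the passage from $\omega,\tau(\omega)$ to honest Liouvillian primitives is provided by Remark \ref{liourem}. The one point that deserves care is that $m$ and $\widetilde m$ in general live in a larger Liouvillian tower than $L$ itself, but since we only need the \emph{existence} of the primitives $\phi,\widetilde\phi$ in some Liouvillian extension of $K$ (not a single tower containing a chosen Galois action), this causes no difficulty.
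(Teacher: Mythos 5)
Your proposal is correct and follows essentially the same route as the paper, which simply reads the corollary off the dichotomy in the proof of Theorem \ref{Singer}: if not all $\sigma(\omega)\wedge\omega$ vanish, the two forms $\omega$ and $\tau(\omega)$ give rise to two independent first integrals. Your explicit verification of independence via $m\,\widetilde m\,d\phi\wedge d\widetilde\phi=\omega\wedge\tau(\omega)\neq 0$ supplies a detail the paper only asserts, and is a welcome addition.
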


\section{A proof of the Prelle-Singer theorem }

In this section we take the Puiseux series approach to prove a well known theorem by Prelle and Singer, thus providing a further illustration of the method.
We consider vector fields \eqref{vfieldnd}, equivalently the 
corresponding forms \eqref{n-1form} of degree $n-1$,
defined over $K=\mathbb C(x_1,\ldots,x_n)$.
We recall some notions and facts (see e.g. Rosenlicht \cite{Ros}).
\begin{definition}\label{elemdef}{\em
\item A differential extension field $L$ of $K$ is called {\it elementary} if and only if $K$ and $L$ have the same constants
 and there exists a tower of fields of the form
\begin{equation}\label{ElemExeq} K=K_0 \subset K_1 \subset \ldots \subset K_N=L, \end{equation}
such that for each $i\in\{0,\ldots,m-1\}$ we have one of the following:
\begin{enumerate}[(i)]
\item $K_{i+1}=K_i(t_i)$, where $t_i\not=0$ and $dt_i/t_i=dR_i $ with some $R_i \in K_i$ (adjoining an exponential);
\item $K_{i+1}=K_i(t_i)$, where $dt_i=dR_i/R_i$ with $R_i \in K_i$ (adjoining a logarithm);
\item $K_{i+1}$ is a finite algebraic extension of $K_i$.
\end{enumerate}
}
\end{definition}

As in the case of Liouvillian extensions, the condition on the constants is unproblematic in our context.

\begin{definition}\label{dLf1}\em{
A non-constant
element, $\phi$, of an elementary extension of $K$ is called an {\it elementary first integral}  of the vector field $\mathcal{X}$ if it satisfies $\mathcal{X} \phi=0$ or, equivalently, $d\phi \wedge  \Omega=0$.}
\end{definition}
The existence of an elementary first integral according to Definition \ref{dLf1} is equivalent to the existence of an elementary extension $L$ of $K$ and $v\in L$, $u_1,\ldots,u_M\in L^*$ and $c_1,\ldots,c_M\in \mathbb C$ such that
\begin{equation}\label{elemrel}
    \left(\sum_{i=1}^M c_i\,\dfrac{du_i}{u_i} +dv\right)\wedge\Omega=0, \quad \sum_{i=1}^M c_i\,\dfrac{du_i}{u_i} +dv\not=0.
\end{equation}
To verify the non-obvious implication, adjoin logarithms if needed. \\
Consider the following version of the main theorem in Prelle and Singer \cite{PreSin}, stated for rational ($n-1$)-forms.
\begin{theorem}\label{algaddendumel}
Let $K = \C(x_1,\ldots, x_n)$, and let $\Omega$ be the ($n-1$)--form {\eqref{n-1form}} over $K$. If there exists an elementary extension $L$ of $K$ such that a relation \eqref{elemrel} holds with $u_i,\,v\in L$, then there exists a finite algebraic extension $\widetilde L$ of $K$, $\widetilde L\subseteq L$,  such that a relation 
\begin{equation*}
    \left(\sum_{i=1}^{\widetilde M} \widetilde c_i\,\dfrac{d\widetilde u_i}{\widetilde u_i} +d\widetilde v\right)\wedge\Omega=0, \quad \sum_{i=1}^{\widetilde M} \widetilde c_i\,\dfrac{d\widetilde u_i}{\widetilde u_i} +d\widetilde v\not=0.
\end{equation*}
holds with $\widetilde c_i\in \mathbb C$, and $\widetilde u_i,\,\widetilde v\in \widetilde L$.
\end{theorem}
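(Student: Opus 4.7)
My plan is to mirror the proof of Theorem \ref{algaddendum} and Lemma \ref{extensions}: I would descend along the elementary tower $K = K_0 \subset \cdots \subset K_N = L$, observing that algebraic steps (type (iii)) are harmless since they only enlarge the eventual $\widetilde L$ inside $L$. The core task is therefore to handle a single transcendental elementary step $K_{i+1} = K_i(t)$ (type (i) or (ii)): given $\theta = \sum_j c_j\, du_j/u_j + dv$ with $u_j, v$ in a finite algebraic extension of $K_{i+1}$, $\theta \neq 0$, $\theta \wedge \Omega = 0$, I must produce a similar nonzero closed wedge-annihilator $\widetilde\theta = \sum_j \widetilde c_j\, d\widetilde u_j/\widetilde u_j + d\widetilde v$ with all generators over a finite algebraic extension of $K_i$.

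First I would invoke Lemma \ref{puislem} to pass to a field $\widetilde L_0$ containing Puiseux expansions in $\tau = t^{1/m}$ of each $u_j$, of $v$, and hence of $\theta$. Writing $\theta = \sum_k \theta_k \tau^k$ with $\theta_k \in \widetilde L_0'$, the closedness $d\theta = 0$ and the annihilator condition $\theta \wedge \Omega = 0$ separate degree by degree: in the exponential case $dt = t\,dR$ yields $d\tau = (1/m)\tau\,dR$, so $d\theta_k + (k/m)\,dR \wedge \theta_k = 0$ and $\theta_k \wedge \Omega = 0$ for every $k$; in the logarithmic case $dt = dR/R$ yields $d\tau = (1/m)\tau^{1-m}dR/R$ and an analogous splitting. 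The primary candidate is $\widetilde\theta := \theta_0$, which in either case is closed over $\widetilde L_0$ and wedges to zero with $\Omega$.

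The heart of the argument is to show that $\theta_0$ inherits the log-plus-exact shape over $\widetilde L_0$. I would factor each $u_j$ as $u_j = \kappa_j\, t^{n_j} \prod_\beta (t-\beta)^{e_{j,\beta}}$ over $\overline{\widetilde L_0}(t)$ with $\kappa_j \in \widetilde L_0^{\ast}$, and expand $v$ by partial fractions in $t$. A direct computation, parallel to the Laurent-expansion analysis in the proof of Lemma \ref{extensions}, presents $\theta_0$ as an explicit combination of the $d\kappa_j/\kappa_j$, a constant multiple of $dR$ (exponential case) or of $dR/R$ (logarithmic case), and log-derivatives of the monic linear factors from the partial-fraction decomposition---each of the required form over some algebraic extension of $\widetilde L_0$. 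A Galois averaging over $\mathrm{Gal}(\widetilde L_0/K_i)$, applied in the manner of the type (iii) case of Theorem \ref{MSing}, then replaces the scattered algebraic ingredients by their norms (inside each $du_j/u_j$) and traces (inside $dv$), producing a representation whose generators lie in a single finite algebraic extension of $K_i$.

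The main obstacle I anticipate is the degenerate case $\theta_0 = 0$. Here the first non-vanishing $\theta_k$, say at index $k_0 \neq 0$, only satisfies $d\theta_{k_0} = -(k_0/m)\,dR \wedge \theta_{k_0}$, which is Liouvillian rather than visibly elementary; the would-be integrating factor is $e^{(k_0/m)R}$. To stay within elementary extensions I would need to trace this phenomenon back to the data: a nonzero $\theta_{k_0}$ with $k_0 \neq 0$ should force a specific monomial in $t$ to be present in the $u_j$ or $v$ (e.g.\ some $u_j$ carrying a matching power of $t$ with the right exponent), so that $e^{(k_0/m)R}$ cancels against a factor already in $\kappa_j t^{n_j}$ and $\theta_{k_0}$ rewrites as a log-plus-exact form over $\widetilde L_0$, with the $dR$ contribution absorbed into a $du_j/u_j$ term. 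Once this reduction is handled, the inductive descent closes, and the finite algebraic extensions accumulated across the transcendental steps combine into the single extension $\widetilde L \subseteq L$ asserted by the theorem.
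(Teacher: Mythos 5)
Your overall strategy --- descend along the elementary tower, expand in Puiseux series in $\tau=t^{1/m}$ via Lemma \ref{puislem}, and extract a single degree coefficient of $\theta=\sum_j c_j\,du_j/u_j+dv$ --- is the same skeleton as the paper's Lemma \ref{extensions2}, and your non-degenerate case $\theta_0\neq 0$ corresponds to the paper's degree-zero identity. But there are two genuine gaps. First, the factorization $u_j=\kappa_j t^{n_j}\prod_\beta(t-\beta)^{e_{j,\beta}}$ does not exist in general: the $u_j$ lie in a finite \emph{algebraic} extension of $L_0(t)$, so they are not rational functions of $t$ and admit no finite product of linear factors. You would first have to push the relation down to $L_0(t)$ by norms and traces, whereas the Galois averaging you invoke is over $\mathrm{Gal}(\widetilde L_0/K_i)$ and comes too late to repair this. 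The paper avoids factorization altogether: writing $u_j=\alpha_j\tau^{r_j}(1+\mathrm{l.o.t.})$, the degree-zero coefficient of $du_j/u_j$ is directly $\frac{d\alpha_j}{\alpha_j}+\frac{r_j}{m}\,dR$, already of log-plus-exact shape over $\widetilde L_0$.

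Second, and more seriously, your treatment of the degenerate case $\theta_0=0$ is not a proof. The proposed mechanism --- that $e^{(k_0/m)R}$ ``cancels against a factor already in $\kappa_j t^{n_j}$'' --- does not identify where the new elementary generator comes from; in general no such cancellation occurs, and the required generator is an element of $\widetilde L_0$ that need not appear among the $\kappa_j$ at all. The paper resolves this in two steps. For positive degrees only $dv$ contributes (each $du_j/u_j$ lies in $\widetilde L_0[[\tau^{-1}]]$), so the top coefficient yields $\bigl(\frac{d\beta_s}{\beta_s}+\frac{s}{m}\,dR\bigr)\wedge\Omega=0$ with $\beta_s$ the leading coefficient of $v$ itself; this form is nonzero because its vanishing would force $\tau=c\,\beta_s^{-1/s}$ to be algebraic over $L_0$. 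In the remaining case one normalizes $u_j\mapsto u_j/(\alpha_j\tau^{r_j})$ and $v\mapsto v-\beta_0$, which leaves $\theta$ unchanged but makes every $u_j=1+O(\tau^{-1})$; the formal $\log(1+x)$ expansion then exhibits $\theta=d\phi$ with $\phi\in\tau^{-1}\widetilde L_0[[\tau^{-1}]]$, and the leading coefficient $\phi_\ell$ of the \emph{primitive} $\phi$ (not of $\theta$) supplies the new generator $\frac{d\phi_\ell}{\phi_\ell}+\frac{\ell}{m}\,dR$ (with the analogous expression in the logarithmic case), again nonzero by transcendence of $\tau$. Without this normalization-plus-formal-logarithm step your induction does not close.
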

In our proof (which is different from the one in \cite{PreSin}), Theorem \ref{algaddendum} is a straightforward consequence (by induction) of the following lemma, which in turn is based on Lemma \ref{puislem} in the appendix.

\begin{lemma}\label{extensions2}
    Let $L_0$ be a differential extension of $K$, $t$ transcendental over $L_0$ such that $L_0(t)$ is elementary over $L_0$, and $L$ a finite algebraic
    extension of $L_0(t)$ such that $L$ and $L_0$ have the same constants.\\
    If there exist $v,\,u_i\in L$, and $c_i\in\mathbb C$ such that \eqref{elemrel} holds, then there exists a finite algebraic extension $\widetilde L_0$ of $L_0$, $\widetilde v\in\widetilde L$, $\widetilde u_i\in \widetilde L^*$ and $\widetilde c_i\in \mathbb C$ such that
\begin{equation}\label{elemreldown}
    \left(\sum \widetilde c_i\,\dfrac{d\widetilde u_i}{\widetilde u_i} +d\widetilde v\right)\wedge\Omega=0, \quad \sum \widetilde c_i\,\dfrac{d\widetilde u_i}{\widetilde u_i} +d\widetilde v\not=0.
\end{equation}
\end{lemma}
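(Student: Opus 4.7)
The plan is to adapt the Puiseux-expansion strategy of Lemma~\ref{extensions}: apply Lemma~\ref{puislem} to pass to a finite algebraic extension $\widetilde L_0\supset L_0$ over which every element of $L$ admits a Puiseux expansion in $\tau:=t^{1/m}$ for some positive integer $m$, and then extract from $\theta:=\sum c_i\,du_i/u_i+dv$ a nonzero coefficient (Puiseux or partial-fraction) which is already of elementary form over $\widetilde L_0$. Writing $u_i=a_{i,p_i}\tau^{p_i}(1+\tilde a_{i,1}\tau+\cdots)$ with $a_{i,p_i}\neq 0$ and $v=\sum_k v_k\tau^k$, all coefficients in $\widetilde L_0$, the hypothesis $\theta\wedge\Omega=0$ (with $\Omega$'s coefficients in $K\subset L_0$) forces $\theta_k\wedge\Omega=0$ for every Puiseux coefficient $\theta_k\in\widetilde L_0'$.

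In Type (i) ($dt/t=dR$, so $d\tau/\tau=dR/m$), a direct expansion of the $\tau^0$-coefficient yields
\begin{equation*}
\theta_0=\sum_i c_i\,\frac{da_{i,p_i}}{a_{i,p_i}}+d\!\Bigl(v_0+\frac{R}{m}\sum_i c_i p_i\Bigr),
\end{equation*}
which already has the desired form ($\widetilde u_i=a_{i,p_i}$, $\widetilde v=v_0+(R/m)\sum c_i p_i$, $\widetilde c_i=c_i$); if $\theta_0\neq 0$ we take $\widetilde\theta:=\theta_0$. Otherwise, regard $\phi:=\sum c_i\log u_i+v$ as a first integral of $\mathcal X$ and Puiseux-expand it as $\phi=\sum\phi_k\tau^k$, where $\phi_k\in\widetilde L_0$ for $k\neq 0$. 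The identity $\mathcal X\tau=(\tau/m)\,\mathcal X R$ turns $\mathcal X\phi=0$ into $\mathcal X\phi_k+(k/m)\phi_k\,\mathcal X R=0$ for each $k$, which for $\phi_k\neq 0$ is equivalent to $\widetilde\theta_k:=d\phi_k/\phi_k+(k/m)\,dR$ satisfying $\widetilde\theta_k\wedge\Omega=0$, and $\widetilde\theta_k$ is evidently of elementary form over $\widetilde L_0$. Since $\theta\neq 0$ and $\theta_0=0$, some $\phi_k$ with $k\neq 0$ must be nonzero; and $\widetilde\theta_k=0$ would force $\phi_k=C_k\,e^{-kR/m}=C_k\tau^{-k}$ with $C_k\in\mathbb C^*$, impossible as $\tau$ is transcendental over $L_0$.

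In Type (ii) ($dt=dR/R$, so $d\tau=dR/(mR\tau^{m-1})$), the Puiseux coefficients of $\phi$ couple across shifts of $m$ and no single ``$\tau^0$''-coefficient cleanly works. Instead I would partial-fraction $\theta$ in $t$ over $\widetilde L_0$ and isolate the residue at each pole $\alpha$:
\begin{equation*}
\theta_{\alpha,1}=N_\alpha\,\frac{dR}{R}+d(v_{\alpha,1}-N_\alpha\alpha),\qquad N_\alpha:=\sum_i c_i\,n_{i,\alpha}\in\mathbb C,
\end{equation*}
which is automatically of elementary form. If some residue is nonzero we are done. Otherwise a cascading argument — using at each step that $t=\log R$ is transcendental over $L_0$, so $\log R\notin\widetilde L_0$ and equations $df=c\,dR/R$ ($c\in\mathbb C^*$, $f\in\widetilde L_0$) have no solution — forces every $N_\alpha=0$, forces the pole coefficients $v_{\alpha,k}$ and the polynomial coefficients $p_j$ of $v$ to collapse in turn (each ``next'' coefficient becomes elementary once the previous one is shown to be constant), and ultimately recovers $d\phi=\theta=0$, contradicting $\theta\neq 0$.

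The main obstacle I anticipate is precisely this Type (ii) bookkeeping: unlike in Type (i), no single Puiseux coefficient is \emph{a priori} of elementary form, and extracting nonzero candidates requires a careful induction over both polynomial degree (from top down, so that the leading coefficient of $v$ is a first integral, then constant) and pole order (from lowest up, so that $v_{\alpha,1}$ is constant, then $v_{\alpha,2}$, and so on). The transcendence of $t=\log R$ over $L_0$ is the essential ingredient that blocks the cascade from trivializing and forces at least one elementary form to be nonzero.
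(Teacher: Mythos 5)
Your treatment of Type (i) is sound and is essentially the paper's own argument: extract the $\tau^0$-coefficient of $\theta$; if it vanishes, use the formal $\log(1+x)$ expansion to write the rest as $d\phi$ with $\phi=\sum_{k\neq 0}\phi_k\tau^k$, pick a nonzero $\phi_k$, and use the transcendence of $\tau$ over $\widetilde L_0$ to see that $d\phi_k/\phi_k+\frac km\,dR\neq 0$.

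The Type (ii) argument, however, has a genuine gap. You propose to partial-fraction $\theta$ in $t$ over $\widetilde L_0$, but the hypotheses only put $u_i$ and $v$ in a finite \emph{algebraic} extension $L$ of $L_0(t)$; enlarging the coefficient field $L_0$ to a finite algebraic extension $\widetilde L_0$ does not make these elements rational in $t$ (think of $q$ with $q^2=t^3+1$). So neither $\theta$ nor $v$ has a partial-fraction decomposition in $t$, and the residues $\theta_{\alpha,1}$ and coefficients $v_{\alpha,k}$, $p_j$ that your cascade manipulates are simply not defined. The standard repair (replace $u_i$ by norms and $v$ by traces down to $L_0(t)$, as in Rosenlicht's treatment) creates a new problem here, namely that the trace of $\theta$ may vanish even though $\theta\neq 0$, destroying the nonvanishing conclusion. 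This is exactly why the paper stays with Puiseux coefficients in Case (ii) as well, and your stated reason for abandoning them --- that no single Puiseux coefficient is a priori of elementary form --- is not correct: since $d(\beta_j\tau^j)=\tau^j\,d\beta_j+\frac jm\beta_j\tau^{j-m}\,dR/R$, the exterior derivative only shifts Puiseux indices by $0$ or $-m$, so the top coefficient of $d\phi$ is the pure ``$d\widetilde v$'' form $d\phi_\ell$, and if $\phi_\ell$ is constant the first coefficient receiving the $dR/R$ contribution has the shape $d\phi_{j}+\frac{\ell}{m}\phi_\ell\,dR/R$, which is again of elementary form over $\widetilde L_0$ (take $\widetilde u=R$) and is nonzero by the same transcendence argument that blocks $\tau\in\widetilde L_0$. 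The descent therefore closes entirely within the Puiseux framework; the partial-fraction induction you sketch is both unavailable as stated and unnecessary.
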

\begin{proof} We consider \eqref{elemrel} over $L$. Given $v$ and the $u_i$, one obtains $\widetilde L_0$ from Lemma \ref{puislem}. We use expansions in descending integer powers of $\tau = t^{1/m}$, frequently using that $\frac{d\tau}{\tau}=\frac1m\frac{dt}{t}$.
Thus
\begin{equation}\label{viexpand}
u_i=\alpha_i\tau^{r_i}+ \text{l.o.t.};\quad r_i\in\mathbb Z;\quad \alpha_i\in\widetilde L_0;
\end{equation}
\begin{equation}\label{uexpand}
v=\sum_j \beta_j\tau^j; \quad j\in \mathbb Z; \quad \beta_j\in \widetilde L_0;
\end{equation}
with the summation in descending order; here ``l.o.t.'' stands for lower order terms.  We may and will assume that $\alpha_i\not=0$. In case $v\not=0$ we let $s$ be the maximal index such that $\beta_s\not=0$.

In Cases (i) and (ii) of Definition \ref{elemdef} above, the differential extends to $\widetilde L_0[[\tau^{-1}]]$ and its quotient field, and moreover it stabilizes $\tau^{k}\,\widetilde L_0[[\tau^{-1}]]$. Thus we get
\[
\dfrac{du_i}{u_i}\in \widetilde L_0[[\tau^{-1}]]
\]
for all $i$.  Similarly, when $v\not=0$, the maximal index with nonzero coefficient in $dv$ is at most $s$.  We can assume that $t$, and hence
$\tau$, is transcendental over $L$; otherwise the Lemma is satisfied 
trivially.
\begin{enumerate}
\item We first suppose that $v\not=0$ with $s>0$. 
\begin{itemize}
\item In Case (i) we have
\[
dv=\tau^s\left(d\beta_s+\frac{s}{m}\beta_s\,dR\right) + \text{l.o.t.}
\]
and, comparing coefficients,
\[
\left(\dfrac{d\beta_s}{\beta_s}+ \frac{s}{m}\,dR\right)\wedge\Omega=0.
\]
Here $\dfrac{d\beta_s}{\beta_s}+ \dfrac{s}{m}\,dR\in \widetilde L_0'$ is nonzero, for else
\[
d\tau/\tau= -\frac{1}{s} d\beta_s/\beta_s\quad\text{and hence}\quad \tau=c\cdot \beta_s^{-1/s},
\]
for some constant $c$. This is algebraic over $L_0$; a contradiction. So, when $s>0$ then the assertion holds, and we may assume that $s\leq 0$ in further discussions.
\item In Case (ii) we have
\[
dv=\tau^s\,d\beta_s+ \tau^{s-1}\left(d\beta_{s-1}+\frac{s}{m}\beta_s\,dR/R\right) + \text{l.o.t.},
\]
and evaluation of the integral condition shows $d\beta_s\wedge\Omega=0$. If $\beta_s$ is not constant, then the assertion holds. If $\beta_s$ is constant, and $s>1$, then
\[
\left(d\beta_{s-1}+\frac{s}{m}\beta_s\,dR/R\right)\wedge\Omega=0.
\]
We must have $d\beta_{s-1}+\frac{s}{m}\beta_s\,dR/R\not=0$, otherwise with $\beta_{s-1}^*=\beta_{s-1}/\beta_s$ one finds that $d\tau=\frac 1s d\beta_{s-1}^*$, which, as above, contradicts the transcendence of $\tau$. So the assertion holds in case (ii) for $s>1$. There remains to consider the cases $s=1$ or $s\leq 0$.
\end{itemize}
\item We turn to the degree zero term in \eqref{elemrel}, including scenarios with $v=0$.  We first show that the
$u_i$ and $v$ can be chosen such that $r_i = 0$, with $\alpha_0 = 1$, and $s<0$.
\begin{itemize}
\item In Case (i) we find
\begin{equation}\label{zeroi}
\left(\sum (c_i\dfrac{d\alpha_i}{\alpha_i}+r_i\,dR)+d\beta_0\right)\wedge\Omega=0.
\end{equation}
If $\omega:=\sum (c_i\dfrac{d\alpha_i}{\alpha_i}+r_i\,dR)+d\beta_0\not=0$ then we are done. If $\omega=0$, then set
\[
\widehat u_i:=u_i/(\alpha_i\tau^{r_i})\quad\text{and} \quad \widehat v:=v-\beta_0.
\]
From
\[
d\widehat u_i=\tau^{-r_i}\alpha_i^{-1}\left(du_i-u_i(\dfrac{d\alpha_i}{\alpha_i}+\frac{r_i}{m} \,dR)\right)
\]

one sees
\[
\dfrac{d\widehat u_i}{\widehat u_i}=\dfrac{du_i}{u_i}-\left(\dfrac{d\alpha_i}{\alpha_i}+\frac{r_i}{m}\, dR\right),
\]
for all $i$, which, in view of $\omega=0$, implies
\begin{equation}\label{hateq1}
  \left(\sum c_i\,\dfrac{d\widehat u_i}{\widehat u_i} +d\widehat v\right) = \left(\sum c_i\,\dfrac{d u_i}{u_i} +d v\right),
\end{equation}
and hence,
\begin{equation}\label{hateq}
  \left(\sum c_i\,\dfrac{d\widehat u_i}{\widehat u_i} +d\widehat v\right)\wedge\Omega=0, \quad \sum c_i\,\dfrac{d\widehat u_i}{\widehat u_i} +d\widehat v\not=0.
\end{equation}
We will continue this discussion below.
\item  In Case (ii) we find
\begin{equation}\label{zeroi}
\left(\sum c_i\dfrac{d\alpha_i}{\alpha_i}+d\beta_0+\beta_1\,dR/R\right)\wedge\Omega=0,
\end{equation}
and by the discussion in item 1 we may assume that $\beta_1$ is constant (possibly zero).  As in case (i), if $\omega:=\sum c_i\dfrac{d\alpha_i}{\alpha_i}+d\beta_0+\beta_1\,dR/R\not=0$ then we are done. If $\omega=0$, then set
\[
\widehat u_i:=u_i/(\alpha_i\tau^{r_i})\quad\text{and} \quad \widehat v:=v-\beta_0-t\beta_1,
\]
such that \eqref{hateq1} holds, and hence \eqref{hateq}, with the $u_i$ and $v$ of the desired form.
\end{itemize}
\item  We can therefore assume that the expansions of all the $u_i$ have
$r_i = 0$ with $\alpha_i=1$, and that $s<0$ in the expansion of $v$. 
If $z=1+\tau z^*$, with $z^*\in \widetilde L_0[[\tau]]$ then, using the formal expansion for
$\log(1+x)$, we can find $\lambda \in \tau^{-1}\,\widetilde L_0[[\tau^{-1}]]$, such that $d\lambda=dz/z$. Doing this for each 
$u_i$, we can find an expression, 
\[\phi = \sum \phi_k \tau^k\in \tau^{-1}\,\widetilde L_0[[\tau^{-1}]],
\] such that 
\[
d\phi=\sum c_i\,\dfrac{d u_i}{u_i} +d v \not=0,\quad d\phi\wedge \Omega =0.
\]
\begin{itemize}
\item In Case (i), there exists some $\ell \in \mathbb{Z}$ such that
\[
0\not=d(\tau^\ell\phi_\ell)=\tau^\ell\left(\dfrac{d\phi_\ell}{\phi_\ell}+\frac{\ell}{m}\, dR\right)
\]
and
\[
\left(\dfrac{d\phi_\ell}{\phi_\ell}+\frac{\ell}{m}\, dR\right)\wedge \Omega=0,
\]
whence the assertion follows.
\item In Case (ii), we consider the highest index $\ell$ such that $\phi_\ell\not=0$. Then $d\phi_{\ell}\wedge\Omega=0$, and in case of constant $\phi_\ell$ the terms in $\tau^{\ell-1}$ give
\[
\left(d\phi_{\ell-1}+\frac{\ell}{m}\phi_\ell\,dR/R\right)\wedge\Omega=0.
\]
The assertion follows since $d\phi_{\ell-1}+\frac{\ell}{m}\phi_\ell\,dR/R\not =0$:
Otherwise, we have $\tau = c-\frac{\phi_{\ell-1}}{\ell\phi_\ell} \in \widetilde{K}$ for some constant $c$, which is a contradiction.
\end{itemize}
\end{enumerate}
\end{proof}

\section{Appendix}

\subsection{Laurent and Puiseux expansions}
Here we collect some pertinent facts about power series expansions. Both Lemma \ref{laurentlem} and Lemma \ref{puislem} might be considered standard. But we include them (with proof sketches), for easy reference, and because they are crucial for our arguments. 
\begin{lemma}\label{laurentlem}
Let $L_0$ be a field, $L=L_0(t)$ with $t$ transcendental over $L_0$, and $r\in L$ nonzero. Then there exist an integer $m$ and $c_j\in L_0$, $j\geq 0$, so that for any integer $\ell\geq 0$ there exists $r_\ell\in L$ with $r_\ell(0)\not=0$ such that 
\begin{equation*}
    t^m r=c_0+tc_1+\cdots+t^\ell c_\ell+t^{\ell+1}r_\ell.
\end{equation*}
Moreover, the assertion also holds with $t$ replaced by $t^{-1}$.
Mutatis mutandis, these statements also hold for elements of any finite dimensional vector space over $L$.
\end{lemma}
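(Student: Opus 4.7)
The plan is to reduce the Laurent expansion of $r$ to the standard fact that any element of $L_0[t]$ with nonzero constant term is invertible in the formal power series ring $L_0[[t]]$. Concretely, I would first write $r=p(t)/q(t)$ with $p,q\in L_0[t]$, $q\neq 0$, and factor out the largest powers of $t$ from numerator and denominator: $p=t^a p_1$, $q=t^b q_1$, with $p_1(0)\neq 0$ and $q_1(0)\neq 0$. Setting $m:=b-a$ then gives $t^m r = p_1/q_1 \in L$.

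Next I would invert $q_1$ in $L_0[[t]]$ by the standard geometric-series trick: writing $q_1=q_1(0)(1+t\tilde q)$ with $\tilde q\in L_0[t]$, one has $q_1^{-1}=q_1(0)^{-1}\sum_{j\geq 0}(-t\tilde q)^j \in L_0[[t]]$. Multiplying by $p_1$ produces a formal series $p_1/q_1=\sum_{j\geq 0} c_j t^j \in L_0[[t]]$, and this collection $(m,\{c_j\})$ will be the data in the lemma. For any $\ell\geq 0$ I then set
\[
r_\ell := \frac{t^m r - \sum_{j=0}^{\ell} c_j t^j}{t^{\ell+1}} \in L.
\]
The numerator lies in $L$ and, embedded into $L_0[[t]]$, has $t$-adic valuation at least $\ell+1$ by construction of the $c_j$, so the quotient $r_\ell$ is regular at $t=0$ (its reduced denominator does not vanish there), with value $c_{\ell+1}$. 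I read the stated condition ``$r_\ell(0)\neq 0$'' as asserting this regularity (well-definedness of the evaluation at $0$), since the literal inequality $c_{\ell+1}\neq 0$ can fail whenever the expansion of $r$ terminates.

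The assertion with $t$ replaced by $t^{-1}$ follows by applying the proven statement to $\tilde r(u):=r(u^{-1})\in L_0(u)$, where $u=t^{-1}$ is again transcendental over $L_0$. The vector-space version follows by choosing a basis of the ambient finite-dimensional $L$-space, applying the scalar result to each nonzero coordinate of $r$, and taking $m$ to be the maximum of the coordinate exponents (padding each coordinate expansion at the leading end with zero coefficients so that all start in degree $0$).

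No serious obstacle arises here; the only point meriting attention is the bookkeeping between the rational remainder $r_\ell\in L$ and the corresponding formal series in $L_0[[t]]$. This is handled by the injectivity of the canonical map $L_0[t]_{(t)}\hookrightarrow L_0[[t]]$, which guarantees that the truncation performed in $L$ and the one performed in $L_0[[t]]$ agree, so the valuation estimate on the series side transfers cleanly to a regularity statement on the rational side.
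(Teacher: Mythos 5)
Your proof is correct and follows essentially the same route as the paper's: both expand the reduced fraction $t^m r = p_1/q_1$ in ascending powers of $t$, the paper by a term-by-term division recursion ($c_0=a_0/b_0$, subtract, factor out $t$, repeat) and you by formally inverting $q_1$ in $L_0[[t]]$, which is the same computation packaged globally. Your reading of ``$r_\ell(0)\neq 0$'' as regularity of $r_\ell$ at $t=0$ (rather than literal nonvanishing of the value, which can fail when coefficients of the expansion vanish) is the sensible one, and is also what the paper's own recursion implicitly requires.
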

\begin{proof}
There is an integer $m$ such that
\begin{equation*}
    t^mr=\dfrac{a_0+ta_1+\cdots}{b_0+tb_1+\cdots}\text{  with  } a_0\not=0, b_0\not=0.
\end{equation*}
To determine the $c_j$, proceed recursively, starting with $c_0= a_0/b_0$ and 
\begin{equation*}
    r_0-\dfrac{a_0}{b_0}=\dfrac{a_0+ta_1+\cdots-a_0/b_0(b_0+tb_1+\cdots)}{b_0+tb_1+\cdots}= t\,r_1.
\end{equation*}
The recursion step works by applying the same argument to $r_\ell$.\\ The last assertion is immediate from $L_0(t)=L_0(t^{-1})$.
\end{proof}
The following lemma is a consequence of the Newton-Puiseux theorem; see Abhyankar \cite{abhyankar}, Lecture 12. We cannot directly use the theorem as stated in \cite{abhyankar} (which assumes an algebraically closed base field), but we will closely trace Abhyankar's proof.
\begin{lemma}\label{puislem}
Let $L_0$ be  field of characteristic zero, $t$ transcendental over $L_0$, moreover let $q$ be algebraic over $L_0(t)$, and $L=L_0(t,q)$. Then there exist
a finite algebraic extension $\widetilde L_0$ of $L_0$
   and a positive integer $m$,
such that every element of $L$ admits a representation 
\begin{equation*}
    \sum_{i=N}^{\infty} a_i\tau^i;\quad \tau=t^{1/m},
\end{equation*}
with all $a_i\in \widetilde L_0((t))$.
Moreover, this statement also holds for all elements of any finite dimensional vector space over $L$, and analogous statements hold with $\tau$ replaced by $\tau^{-1}$.
\end{lemma}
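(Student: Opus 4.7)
The plan is to follow Abhyankar's constructive Newton--Puiseux argument (Lecture 12 of \cite{abhyankar}) step by step, paying close attention to the base field at each stage so that only a finite algebraic extension of $L_0$ is introduced. Let $P(Y)\in L_0[t][Y]$ be an irreducible polynomial of $Y$-degree $n$ with $P(q)=0$. Since $L = L_0(t,q)$ is spanned over $L_0(t)$ by $1,q,\ldots,q^{n-1}$, and since Puiseux series over a fixed base in a fixed $\tau$ form a field closed under the usual operations, it suffices to produce a Puiseux expansion for $q$ itself; the vector-space assertion then reduces componentwise, and the variant with $\tau^{-1}$ follows by applying the whole argument after the substitution $t\mapsto t^{-1}$, which preserves $L_0(t)$.

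To produce the expansion, I would first form the Newton polygon of $P$ viewed as a polynomial over $L_0((t))$. Its lower boundary has finitely many edges with rational slopes $-p_k/q_k$ in lowest terms; put $m_1=\mathrm{lcm}(q_k)$ and $\tau_1=t^{1/m_1}$. For each edge the associated characteristic polynomial $\chi_k(Z)\in L_0[Z]$ has as its roots in $\bar L_0$ the candidate leading coefficients of Puiseux roots of $P$ of that slope. Adjoining the finitely many roots of the finitely many $\chi_k$ yields a first finite extension $L_0^{(1)}$ of $L_0$. Some root, say $c\in L_0^{(1)}$, must be the true leading coefficient of the expansion of $q$; substituting $Y = c\,\tau_1^{p_k m_1/q_k} + Y'$ into $P$ produces a polynomial $P^{(1)}(Y')$ over $L_0^{(1)}((\tau_1))$ of which $q-c\,\tau_1^{p_k m_1/q_k}$ is a root.

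Iterating this procedure on $P^{(1)}$, at each stage either the Newton polygon has a single edge whose characteristic polynomial has a simple root---in which case a Hensel-type lifting argument determines all further Puiseux coefficients uniquely within the current field, introducing no new algebraic element---or further branching is required, contributing one more finite extension and possibly one more ramification integer. The crucial bookkeeping observation is that every branching step corresponds to a strict drop in the degree of the factor of $P$ along the chosen branch, so branching occurs at most $n$ times in total. Taking $\widetilde L_0$ to be the compositum of the finitely many intermediate extensions, and $m$ to be the product of the finitely many ramification integers, I obtain the required finite algebraic extension of $L_0$ and a single $\tau = t^{1/m}$ with $q\in\widetilde L_0((\tau))$, whence every element of $L$ lies in $\widetilde L_0((\tau))$.

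The principal obstacle, and the step that will demand the most explicit care, is justifying the Hensel-style stabilization over a base field that is not algebraically closed: once the Newton polygon of a refined polynomial has a unique edge with a simple characteristic root, one must verify that the successive Puiseux coefficients can be solved for recursively inside the current field, without any further algebraic extension. This is implicit in Abhyankar's treatment but must be reproduced here, precisely because \cite{abhyankar} assumes an algebraically closed base, whereas the whole point of the lemma is to control the size of the algebraic extension of $L_0$ that is actually needed. Once that stabilization is pinned down, the remaining steps---reduction to the scalar case, the passage to finite-dimensional vector spaces, and the $\tau^{-1}$ variant---are essentially cosmetic.
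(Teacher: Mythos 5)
Your overall strategy is the right one --- reduce to the single algebraic element $q$, run a Newton--Puiseux-type iteration, and track the finite extensions of $L_0$ introduced along the way --- but it is organised differently from the paper's proof, and the organisation you chose leaves a concrete gap. The paper follows Abhyankar's scheme: first a normalisation $Q(t,t^d(y+c_1/n))=\widehat Q(\tau,y)$ that kills the coefficient of $y^{n-1}$ and makes some $\widehat c_j(0)\not=0$; this forces the reduced polynomial $\widehat Q_0=\widehat Q(0,y)$ to have at least two \emph{distinct} roots, so over its splitting field $\widehat L_0$ one always gets a factorisation into two coprime factors, Hensel lifts that factorisation, and one inducts on the degree $n$. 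Your version is instead the classical coefficient-by-coefficient Newton polygon algorithm, and the dichotomy you assert --- ``either the characteristic polynomial has a simple root and Hensel stabilises, or further branching occurs with a strict degree drop'' --- is incomplete. There is a third case: a single edge whose characteristic polynomial is $(Z-c)^k$ with $k>1$. No branching occurs, the degree of the working polynomial does not drop, and the simple-root form of Hensel does not apply; this case can recur for many consecutive steps, so your bound ``branching occurs at most $n$ times'' does not by itself give termination. Closing this requires the characteristic-zero separability argument (the discriminant of an irreducible $Q$ is nonzero, so the multiplicity must eventually fall to $1$) or, equivalently, Abhyankar's normalisation, which eliminates the case altogether. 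This is precisely the step your sketch must supply.

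A secondary remark: the obstacle you single out --- justifying Hensel over a non-algebraically-closed base --- is not actually where the difficulty lies. Hensel's lemma holds over $\widetilde L_0[[\tau]]$ for any field $\widetilde L_0$; algebraic closure of the base plays no role there. The only places new algebraic elements over $L_0$ enter are the splitting of the characteristic (or reduced) polynomials, and your bookkeeping of those is fine (taking $m$ to be the product rather than the lcm of the ramification indices is harmless). The reduction to $q$, the extension to finite-dimensional vector spaces, and the $\tau^{-1}$ variant via $t\mapsto t^{-1}$ all match the paper and are unproblematic.
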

\begin{proof}
    It suffices to prove the statement for $q$, since $L=L_0(t)\,[q]$, and with $q$, every polynomial in $q$ with coefficients in $L_0(t)$ will have a representation in $\widetilde L_0((\tau))$ as asserted.\\
    Let $ Q(t,y)\in L_0(t)[y]$ denote the minimal polynomial of $q$ over $L_0(t)$;
    \begin{equation*}
    Q=y^n+c_1 y^{n-1}+\cdots+c_n,
    \end{equation*}
    with all $c_j\in L_0(t)\subset L_0((t))$.
    Due to Lemma \ref{laurentlem} we may assume that $n>1$. We will show the existence of a finite extension $\widehat L_0$ of $L_0$ such that $Q$ is reducible over $\widehat L_0((\tau))$. The following arguments (due to Abhyankar) do not rely on rationality of the $c_j$, or irreducibility of $Q$.\\
    In case $Q=y^n$ reducibility is obvious. Otherwise, following Abhyankar's proof there exists a rational number $d$ and a positive integer $m$, so that with $\tau=t^{1/m}$ one has
    \[
    Q(t,t^d(y+c_1/n))=:\widehat Q(\tau,y)=y^n+\sum_{j=1}^n \widehat c_j(\tau)\,y^{n-j}
    \]
    with all $\widehat c_j\in L_0[[\tau]]$, and $\widehat c_1=0$, some $\widehat c_j(0)\not=0$. Note the correspondence between $Q$  and $\widehat Q$.\\
    Now set $\widehat Q_0:=\widehat Q(0,y)$, and let $\widehat L_0$ be its splitting field over $L_0$. By the argument in \cite{abhyankar}, p.~93, one has
    \[
    \widehat Q_0= \widehat P_{0,1}\cdot \widehat P_{0,2},
    \]
    with relatively prime $\widehat P_{0,i}\in \widehat L_0[y]$. With Hensel's lemma (as stated in \cite{abhyankar}, p.~90) one gets
    \[
    \widehat Q =\widehat P_1\cdot \widehat P_2
    \]
    with relatively prime $\widehat P_i\in \widehat L_0[[\tau]]\,[y]$.
    By the correspondence between $Q$ and $\widehat Q$ one arrives at 
    \[
    Q=P_1\cdot P_2; \quad P_i\in \widehat L_0((\tau))\,[y].
    \]
    Proceeding by induction on the degree (possibly requiring further field extensions and increase of $m$) one obtains 
   a finite field extension $\widetilde L_0$ and a decomposition 
    \[
    Q(t,y)=\prod(y-\eta_j)
    \]
    as a product of linear factors, with the $\eta_j\in \widetilde L_0((t^{1/m}))$. 
    Now $Q(t,q)=0$ shows that $q=\eta_k $ for some $k$. 
    To prove the assertion for decreasing powers of $\tau$, start with $s=t^{-1}$ and repeat the argument over $L_0(s)$.\\
    The generalization to finite dimensional vector spaces over $L$ is straightforward.
\end{proof}


\end{document}